\documentclass[journal,twocolumn,twoside,10pt]{IEEEtran}
\usepackage{graphicx}
\usepackage{bm}
\usepackage[cmex10]{amsmath}
\usepackage{amssymb}
\usepackage{acronym}
\usepackage{setspace}
\usepackage{amsthm}
\interdisplaylinepenalty=2500
\usepackage{footnote}
\usepackage{cite}

\newtheorem{theorem}{Theorem}
\newtheorem{lemma}{Lemma}

\newtheorem{corollary}{Corollary}

\graphicspath{{figure/}}

%
\ifCLASSINFOpdf

\else

\fi
\hyphenation{op-tical net-works semi-conduc-tor}

\begin{document}

\title{Locally Orthogonal Training Design for Cloud-RANs Based on Graph Coloring}
\author{Jianwen~Zhang,
		Xiaojun~Yuan,~\IEEEmembership{Senior Member,~IEEE},
		and Ying Jun (Angela) Zhang,~\IEEEmembership{Senior Member,~IEEE}
\thanks{J. Zhang and X. Yuan are with the School of Information Science and Technology, ShanghaitTech University, Shanghai, China, email: \{zhangjw1, yuanxj\}@shanghaitech.edu.cn.}
\thanks{Y. Zhang is with the the Department of Information Engineering, The Chinese University of Hong Kong, Shatin, New Territories, Hong Kong. e-mail: yjzhang@ie.cuhk.edu.hk.}
\thanks{Part of this work has been submitted to IEEE Global Communications Conference (GlobeCom2016).}
		}



\maketitle

\begin{abstract}
	We consider training-based channel estimation for a cloud radio access network (CRAN), in which a large amount of remote radio heads (RRHs) and users are randomly scattered over the service area. In this model, assigning orthogonal training sequences to all users will incur a substantial overhead to the overall network, and is even impossible when the number of users is large. Therefore, in this paper, we introduce the notion of \emph{local orthogonality}, under which the training sequence of a user is orthogonal to those of the other users in its neighborhood. We model the design of locally orthogonal training sequences as a graph coloring problem. Then, based on the theory of random geometric graph, we show that the minimum training length scales in the order of $\ln K$, where $K$ is the number of users covered by a CRAN. This indicates that the proposed training design yields a scalable solution to sustain the need of large-scale cooperation in CRANs. Numerical results show that the proposed scheme outperforms other reference schemes.
\end{abstract}

\begin{IEEEkeywords}
Cloud radio access networks, channel estimation, graph coloring, local orthogonality, training design, pilot contamination
\end{IEEEkeywords}
\IEEEpeerreviewmaketitle

\section{Introduction}
	\IEEEPARstart{C}{loud} radio access network (CRAN), which exhibits significant improvement on spectrum efficiency, is one of the enabling technologies for future 5G wireless communications \cite{5Gkey}. The main idea of CRAN is to split a base station into a remote radio head (RRH) for radio frequency signaling and a baseband unit (BBU) for baseband signal processing. BBUs, centralized in a BBU pool \cite{cran11, Checko2015}, are connected to RRHs via high-capacity fronthaul links, as illustrated in Fig. \ref{Fig:CRAN_Structure}. The CRAN technology significantly improves the system throughput via ultra-dense RRH deployment and centralized control \cite{cran11}. However, a CRAN involves cooperation among hundreds and even thousands of RRHs and users. Such a large-scale cooperation imposes a stringent requirement on channel state information (CSI). The acquisition, distribution, and storage of CSI may incur a considerable overhead to the network. As such, it is highly desirable to design an efficient channel training scheme for CRANs.
	\begin{figure}[!t]
		\centering
		\includegraphics[width = 0.43\textwidth]{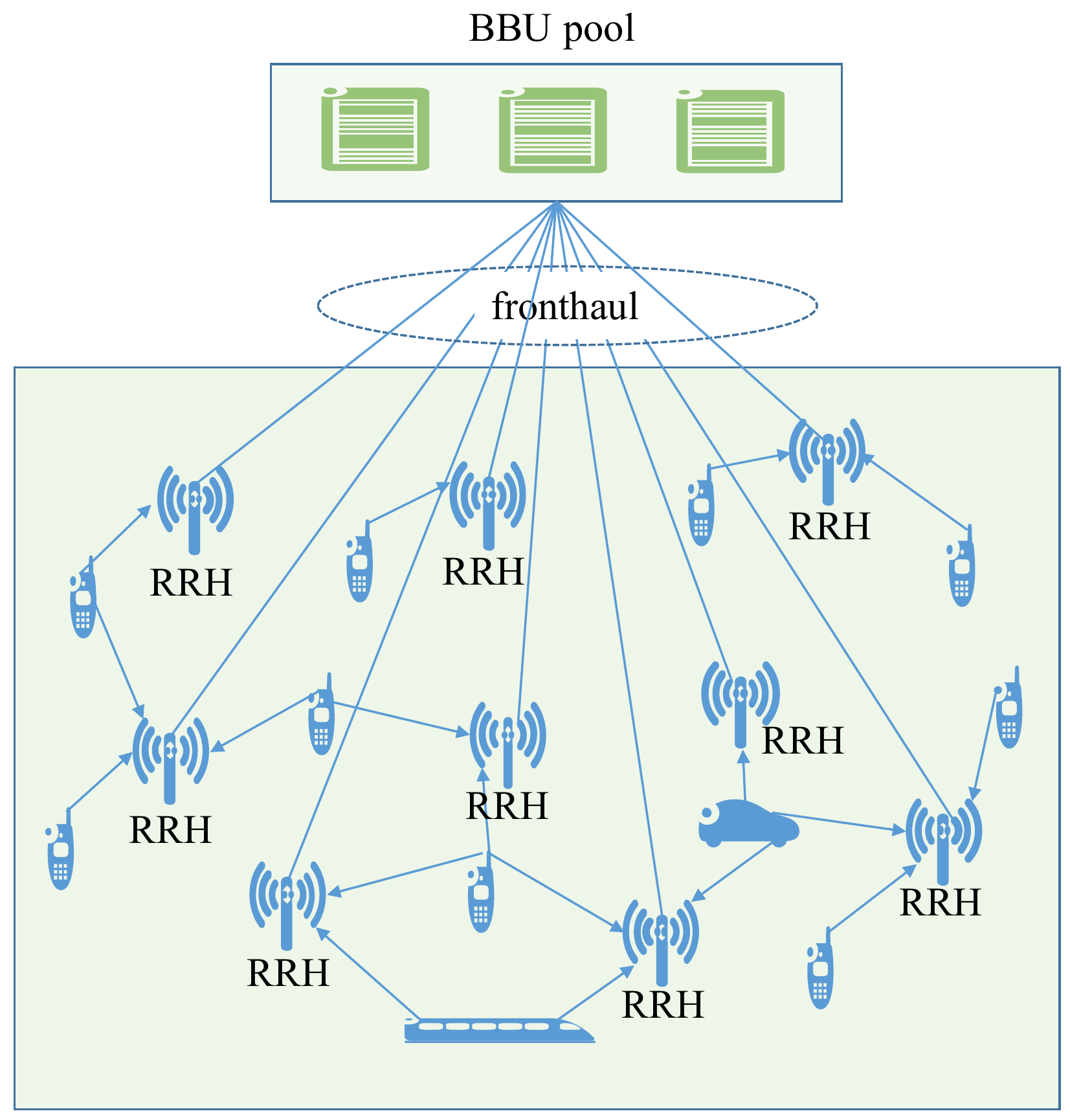}
		\caption{CRAN architecture.}\label{Fig:CRAN_Structure}
	\end{figure}
	
	Time-multiplexed training design has been studied in various communication models \cite{Coldrey2007,Xiaojun2016,Marzetta2010,Fernades2013,Mugen15, Jma2014, Upadhya2016, Ngo2012, Muller2013, Huh12, Zhu15, Marzetta15, Yin13, Caire2016, JunZhang2016, Ahmadi2015, Jose2011}. In particular, the authors in \cite{Coldrey2007} derived the optimal time-multiplexed training design for conventional point-to-point multiple-input multiple-output (MIMO) systems, where the transmit antennas are co-located and so are the receive antennas. In \cite{Xiaojun2016}, the authors studied the optimal time-multiplexed training design for a multiuser MIMO channel where users are scattered and suffer from the near-far effect. Orthogonal training sequences were shown to be optimal or nearly optimal in these scenarios \cite{Coldrey2007,Xiaojun2016}. However, orthogonal training design is very inefficient when applied to CRAN, for that a CRAN system usually covers a large number of users and RRHs. Allocating orthogonal training sequences to users inevitably leads to an unaffordable overhead to the system.
	
	In this paper, we investigate the design of training sequences for time-multiplexed channel training in CRANs. A crucial observation in a CRAN is that both RRHs and users are scattered over a large area. As such, due to severe propagation attenuation of electromagnetic waves, the interference from far-off users can be largely ignored when processing the received signal of an RRH. Therefore, rather than global orthogonality, we introduce the notion of \emph{local orthogonality}, in which the training sequences of the neighboring users with distance no greater than a certain threshold (denoted by $r$) are required to be orthogonal to each other. The training design problem is then formulated as to find the minimum training length that preserves local orthogonality. This problem can be recast as a vertex-coloring problem, based on which the existing vertex-coloring algorithms \cite{Brelaz1979,Malaguti2008} are applicable. Further, we analyze the minimum training length as a function of the network size. Based on the theory of random geometric graph, we show that the training length is $O(\ln K)$ almost surely, where $K$ is the number of users. This guarantees a scalable training-based CRAN design, i.e., the proposed training design can be applied to a large-size CRAN system satisfying local orthogonality at the cost of a moderate training length.
	
	In the proposed scheme, the neighborhood of an RRH is defined as the area centered around it with distance below the threshold $r$. For a large $r$, the neighborhood of an RRH is large and more multi-user interference from neighboring users can be eliminated in channel estimation. Then, local orthogonality achieves a channel-estimation accuracy close to that of global orthogonality. However, a larger neighborhood area implies more channel coefficients to be estimated, thereby incurring a larger overhead to the system. Therefore, there is a balance to strike between the accuracy and the overhead of channel estimation. In this paper, we study this tradeoff for throughput maximization. We show that, with local orthogonality, the optimal $r$ for throughput maximization can be numerically determined. 
	
\subsection{Related Work}
	In the considered training-based CRAN, orthogonal training sequences do not interfere with each other; the interference only comes from far-off users with non-orthogonal training sequences. This is similar to the problem of \emph{pilot contamination} in multi-cell massive MIMO systems \cite{Marzetta2010}, where the orthogonal training sequences used in each cell are reused among cells.
	
	There are several existing approaches to tackle the pilot contamination problem \cite{Marzetta2010, Fernades2013, Mugen15, Jma2014, Upadhya2016, Ngo2012, Muller2013, Huh12, Zhu15, Marzetta15, Yin13, Caire2016, JunZhang2016, Ahmadi2015, Jose2011}. For example, data-aided channel estimation with superimposed training design was proposed in \cite{Jma2014,Mugen15,Upadhya2016} to reduce the system overhead spent on channel estimation and to suppress pilot contamination. In \cite{Ngo2012,Muller2013}, the authors proposed blind channel estimation based on singular value decomposition (SVD). However, both superimposed training and blind channel estimation involve high computational complexity in implementation, especially when applied to a CRAN with a large network size.
	
	Time-multiplexed training design has also been considered to address the problem of pilot contamination \cite{Huh12,Zhu15,Marzetta15,Yin13}. The key issue is the design of the training-sequence reuse pattern among cells. In \cite{Huh12} and \cite{Zhu15}, users within each cell are classified into cell-edge and cell-center users. The pilots for cell-center users are reused across all cells, while orthogonal pilot sub-sets are assigned to the cell-edge users. In \cite{Marzetta15}, the cells surrounding the home cell by one or more rings are assigned orthogonal pilot sets. In \cite{Yin13}, training sequences are assigned to users based on the angle of arrival. However, the notion of cell is no longer adopted in CRAN, as RRHs in a CRAN are fully connected to enables full-scale cooperation. Therefore, the cell-based techniques in \cite{Huh12,Zhu15,Marzetta15,Yin13} are not applicable to CRAN. 

	It is also worth mentioning that training-based CRAN has been previously studied in the literature \cite{Caire2016,JunZhang2016}. In \cite{Caire2016}, the authors proposed a coded pilot design where RRHs can be turned on or off to avoid pilot collisions, which may degrade the system performance. In \cite{JunZhang2016}, in each transmission block, only a portion of users is allowed to transmit pilots for channel training, and the channels of the other users are not updated. This scheme can only accommodate a relatively small number of users to avoid an unaffordable training overhead. Therefore, training design for CRAN deserves further endeavor, which is the main focus of this work.

\subsection{Organization}
	The remainder of this paper is organized as follows. In Section II, we describe the system model. The definition of local orthogonality and the problem formulation are presented in Section III. In Section IV, we introduce our training sequence design algorithm. In Section V, we characterize the minimum training length as a function of the size of CRAN. The practical design and numerical results are given in Section VI. Section VII concludes this paper.

\subsection{Notation}
	Regular letters, lowercase bold letters, and capital bold letters represent scalars, vectors, and matrices, respectively. Throughout this paper, the vectors are row vectors. $\mathbb{R}$ and $\mathbb{C}$ represent the real field and the complex field, respectively; the superscripts $^\text{H}$, $^\text{T}$, and $^\text{-1}$ represent the conjugate transpose, the transpose, and the inverse, respectively; $|\cdot|$, $\|\cdot \|_2$, $\|\cdot \|_\infty$, and $\det(\cdot)$ represent the absolute value, the $\ell_2$-norm, the $\ell_\infty$-norm, and the determinant, respectively; $\mathbf{a} \perp \mathbf{b}$ means that vector $\mathbf{a}$ is orthogonal to vector $\mathbf{b}$; $\text{diag}\{\mathbf{a}\}$ represents the diagonal matrix with the diagonal specified by $\mathbf{a}$; $\rightarrow$ represents ``tends to'' and $a.s.$ is the abbreviation of almost surely; $\limsup_{K\rightarrow \infty}$ denotes limit superior as $K$ tends to infinity. For any functions $f(x)$ and $g(x)$, $f(x) = O(g(x))$ is equivalent to $\lim_{x\rightarrow \infty} \left|\frac{f(x)}{g(x)}\right| = c$, where $c$ is a constant coefficient.
	
\section{System Model}\label{sec:preliminaries}
	Consider a CRAN consisting of $N$ RRHs and $K$ users randomly distributed over a service area. The RRHs are connected to a BBU pool by the fronthaul. We assume that the capacity of the fronthaul is unlimited, so that the signals received by the RRHs are forwarded to the BBU pool without distortion for centralized signal processing. We also assume that users and RRHs are uniformly distributed over the service area that is a square with side length $r_0$. The result in this paper can be extended to service areas with other shapes. We consider a multiple-access scenario where users simultaneously transmit individual data to RRHs.  The channel is assumed to be block-fading, i.e., the channel remains invariant within the coherence time of duration $T$.  
	
	Suppose that a transmission frame consists of $T$ channel uses. Then, the received signal of RRH $i$ at time $t$ is given by
	\begin{equation}\label{equ:recdS1}
		y_{i,t} = \sum_{k=1}^K h_{i,k} \gamma_{i,k} x_{k,t} + z_{i,t}, i=1,\dots, N, t=1,\dots, T
	\end{equation}
	where $x_{k,t}$ denotes the signal transmitted by user $k$ at time $t$, $z_{i,t}\sim \mathcal{CN}(0, N_0)$ is the white Gaussian noise at RRH $i$, $h_{i,k}$ is the small-scale fading factor from user $k$ to RRH $i$ and is independently drawn from $\mathcal{CN}(0,1)$, and $\gamma_{i,k}$ represents the large-scale fading factor from user $k$ to RRH $i$. In this paper, $\gamma_{i,k}$ is modeled as $\gamma_{i,k} = d_{i,k}^{-\frac{\eta}{2}}$, where $d_{i,k}$ denotes the distance between user $k$ and RRH $i$, and $\eta$ is the path loss exponent. Denote by $\mathbf{y}_i = [y_{i,1},\dots, y_{i,T}] \in \mathbb{C}^{1\times T} $ the received signal at RRH $i$ in a vector form and $\mathbf{x}_k = [x_{k,1},\dots,x_{k,T}] \in \mathbb{C}^{1\times T}$ the corresponding transmitted signal vector of user $k$. The signal model in \eqref{equ:recdS1} can be rewritten as
	\begin{equation}\label{equ:recdS2}
		\mathbf{y}_i = \sum_{k=1}^K h_{i,k} \gamma_{i,k} \mathbf{x}_k + \mathbf{z}_i, i = 1,\dots,N
	\end{equation}
	where $\mathbf{z}_i = [z_{i,1},\dots, z_{i,T}] \in \mathbb{C}^{1\times T}$ is the noise vector at RRH $i$. The power constraint of user $k$ is given by
	\begin{equation}\label{equ:pc}
		\frac{1}{T} \| \mathbf{x}_k \|_2^2 \leq P_0, \quad k = 1,\dots, K
	\end{equation}
	where $P_0$ is the power budget of each user.
	
	We note that if $\gamma_{i,k}=\gamma$ for a certain constant $\gamma$ for all $i$ and $k$, the system in \eqref{equ:recdS2} reduces to a conventional MIMO system where both users and RRHs are co-located. If $\gamma_{i,k}=\gamma_{k}$ for all $i$ and $\gamma_{k}\neq\gamma_{k^\prime}$ for $k\neq k^\prime$, then the system in \eqref{equ:recdS2} reduces to a multiuser system where the RRHs are co-located but the users are separated. In this paper, we consider a general situation that $\gamma_{i,k}\neq\gamma_{i^\prime,k^\prime}$ for $i\neq i^\prime$ or $k\neq k^\prime$, i.e., both users and RRHs are separated from each other.
	
	The large-scale fading coefficients $\{\gamma_{i,k}\}$ only depend on user positions and vary relatively slowly. It is usually much easier to acquire the knowledge of $\{\gamma_{i,k}\}$ than to acquire the small-scale fading coefficients $\{h_{i,k}\}$. Hence, we assume that $\{\gamma_{i,k}\}$ are known at RRHs, while $\{h_{i,k}\}$ need to be estimated based on the received data in a frame-by-frame manner. 
	
	In this paper, we aim to design an efficient transmission scheme to jointly estimate the small-scale fading coefficients $\{h_{i,k}\}$ and detect the signals $\{\mathbf{x}_k \}$. We adopt a two-phase based training scheme consisting of a training phase and a data transmission phase. During the training phase, users transmit training sequences to RRHs for channel estimation. During the data transmission phase, users' data are transmitted and detected at the BBU pool based on the estimated channel. More details follow.
		
\subsection{Training Phase}\label{subsec:training phase}
	Without loss of generality, let $\alpha T$ be the number of channel uses assigned to the training phase, where $\alpha \in (0,1)$ is a time-spliting factor to be optimized. We refer to $\alpha T$ as the training length. From \eqref{equ:recdS2}, the received signal at RRH $i$ for the training phase is given by
	\begin{equation}\label{equ:Training Model}
		\mathbf{y}_{i}^\text{p} = \sum_{k=1}^K h_{i,k} \gamma_{i,k} \mathbf{x}_{k}^\text{p} + \mathbf{z}_{i}^\text{p}, i = 1,\dots,N
	\end{equation}
	where $\mathbf{y}_{i}^\text{p}\in \mathbb{C}^{1\times \alpha T}$ is the received signal at RRH $i$, $\mathbf{x}_{k}^\text{p} =[x_{k,1},\dots, x_{k,\alpha T}] \in \mathbb{C}^{1\times \alpha T}$ is the training sequence transmitted by user $k$, and $\mathbf{z}_{i}^\text{p} \in \mathbb{C}^{1\times \alpha T}$ is the corresponding additive noise. The power constraint for user $k$ in the training phase is given by
	\begin{equation}\label{equ:Training PC}
		\frac{1}{\alpha T} \| \mathbf{x}_{k}^\text{p} \|_2^2 \leq \beta_k P_0, \quad k = 1, 2, \cdots , K
	\end{equation}
	where $\beta_k$ represents the power coefficient of user $k$ during the training phase.
	
\subsection{Data Transmission Phase}\label{subsec:data phase}
	In the data transmission phase, the data of users are transmitted to RRHs and then forwarded to the BBU pool through the fronthaul. The BBU pool performs coherent detection based on the estimated channel obtained in the training phase. From \eqref{equ:recdS2}, the received signal of RRH $i$ in the data transmission phase is written as
	\begin{equation}\label{equ:data model}
		\mathbf{y}_{i}^\text{d} = \sum_{k=1}^K h_{i,k} \gamma_{i,k} \mathbf{x}_{k}^\text{d} + \mathbf{z}_{i}^\text{d},i = 1,\dots,N
	\end{equation}
	where $\mathbf{x}_{k}^\text{d} = [x_{k,\alpha T+1},\dots, x_{k,T}] \in \mathbb{C}^{1\times (1-\alpha)T}$ is the data signal of user $k$, $\mathbf{y}_{i}^\text{d}$ is the corresponding received signal at RRH $i$, and $\mathbf{z}_{i}^\text{d}$ is the corresponding noise. The power constraint of user $k$ in the data transmission phase is given by 
	\begin{equation}
		\frac{1}{(1-\alpha)T} \|\mathbf{x}_{k}^\text{d}\|_2^2 \leq \beta_k^\prime P_0
	\end{equation}
	where the coefficient $\beta_k^\prime = \frac{1-\alpha \beta_k}{1-\alpha}$ satisfies the power constraint in \eqref{equ:pc}. 

\section{Problem Formulation}\label{sec:ProForm}
\subsection{Throughput Optimization}
	The mutual information throughput is a commonly used performance measure for training-based systems \cite{Coldrey2007, Xiaojun2016}. The throughput expression for the proposed training based scheme is derived in Appendix \ref{appendixI}. The system design problem can be formulated as to maximize the throughput over the training sequence $\{ \mathbf{x}_{k}^\text{p} \}$, the training length $\alpha T$, the number of users $K$, and the power allocation coefficients $\{\beta_k\}$ subject to the power constraints in \eqref{equ:pc} and \eqref{equ:Training PC}. Similar problems have been previously studied in the literature. For example, when users are co-located and so are RRHs, the model in \eqref{equ:recdS2} reduces to a conventional point-to-point MIMO system. The optimal training design for throughput maximization was discussed in \cite{Coldrey2007}. Specifically, the optimal strategy is to select a portion of active users while the others keep silent in transmission. The optimal number of active users is equal to $\frac{T}{2}$, and each active user is assigned with an orthogonal training sequence.\footnote{For the MIMO system in \cite{Coldrey2007}, the total transmission power is constrained by a constant invariant to the number of users. The case that the total power linearly scales with $K$ was discussed in \cite{Xiaojun2016}.} Moreover, when only RRHs are co-located, the model in \eqref{equ:recdS1} reduces to a multiuser MIMO system. In this case, users are randomly distributed and suffer from the near-far effect. It was shown in \cite{Xiaojun2016} that the optimal number of active users is in general less than $\frac{T}{2}$, but the orthogonal training design is still near-optimal. A key technique used in \cite{Coldrey2007, Xiaojun2016} is the rotational invariance of the channel distribution when users or RRHs or both are co-located.
	
	This paper is focused on the training design for a general CRAN setting where neither RRHs nor users are co-located. The rotational invariance property of the channel in general does not hold in our setting, and therefore the analysis in \cite{Coldrey2007} and \cite{Xiaojun2016} are no longer applicable. As suggested by the optimal design in \cite{Coldrey2007} and \cite{Xiaojun2016}, it is desirable to design orthogonal training sequences for CRAN. The challenge is that a CRAN usually serves a large number of active users. Thus, assigning every user with an orthogonal training sequence leads to an unaffordable overhead. This inspires us to design the so-called \emph{locally orthogonal} training sequences for CRANs, as detailed below.

\subsection{Local Orthogonality}
	The main advantage of using orthogonal training sequences is that the training signal from one user does not interfere with the training signals from other users. However, the number of available orthogonal sequences is limited by the training length $\alpha T$. This quantity should be kept small so as to reduce the cost of channel estimation. 
	
	A crucial observation in a CRAN is that both RRHs and users are randomly scattered over a large area. Thus, due to the severe propagation attenuation of electromagnetic waves over distance, the interference from far-off users can be largely ignored when processing the received signal of an RRH. This fact inspires the introduction of the channel sparsification approaches in \cite{VLau2014,JZhang2014,CFan2015}. These approaches were originally proposed to reduce the implementational and computational complexity. In contrast, in this paper, we use channel sparsification as a tool to identify the most interfering users in the received signal of each RRH. We only assign orthogonal training sequences to the most interfering users and ignore the rest, hence the name \emph{local orthogonality}.
	
	We basically follow the channel sparsification approach in \cite{CFan2015}. The only difference is that here the $l_\infty$ norm\footnote{For a vector $\mathbf{x}=[x_1,x_2,\dots, x_N]$, $\|\mathbf{x}\|_\infty = \max\{|x_1|,\dots,|x_N|\}$. The reason that we adopt $l_\infty$ norm is to ease our analysis on the minimum training length detailed in Section \ref{sec:Asymptotic}, where the random geometric theory developed in \cite{Martin1997} directly applies. Other types of norm can also be used but the required random geometric theory is different. We refer interested readers to \cite{RGG} for details.} is adopted as a measure of the distance between two nodes. Specifically, the channel sparsification is to ignore relatively weak channel links based on the following criteria:
	\begin{equation} \label{equ:sparsification}
		\tilde{h}_{i,k} = \left\{
		\begin{array}{l}
			h_{i,k}, \ \|\mathbf{b}_i - \mathbf{u}_k \|_\infty < r\\
			0,   \quad  \  \text{otherwise}\\
		\end{array}
		\right.
	\end{equation}
	where $r$ is a predefined threshold, and $\mathbf{u}_k\in \mathbb{R}^{1\times 2}$ and $\mathbf{b}_i\in \mathbb{R}^{1\times 2}$ denote the coordinates of user $k$ and RRH $i$, respectively. 
	
	We now present graphical illustrations of CRANs after channel sparsification. Denote by 
	\begin{align}
		\mathcal{U} &= \{1,\cdots,K\}\\
		\intertext{the set of user indexes and by}
		\mathcal{B}&=\{1,\cdots,N\}
	\end{align}
	the set of RRH indexes. Define
	\begin{align}
		\mathcal{B}_k   &\triangleq \text{the set of RRHs serving user $k$, for $k\in \mathcal{U}$}\\
		\mathcal{U}_i   &\triangleq \text{the set of users served by RRH $i$, for $i\in \mathcal{B}$}\\
		\mathcal{U}_i^c &\triangleq \text{the complement of $\mathcal{U}_i$, for $k\in \mathcal{U}$}.
	\end{align}
	The user associations with RRHs after channel sparsification are illustrated in Fig. \ref{Fig:Channel_Sparsification}(a), where each user is connected to an RRH by an arrow if the distance between the user and the RRH is below the threshold $r$. Alternatively, the system after channel sparsification can also be represented as a bipartite graph shown in Fig. \ref{Fig:Channel_Sparsification}(b), where each black node represents an RRH, and each white node represents a user.
		
	\begin{figure}[!t]
		\centering
		\includegraphics[width = 0.31\textwidth]{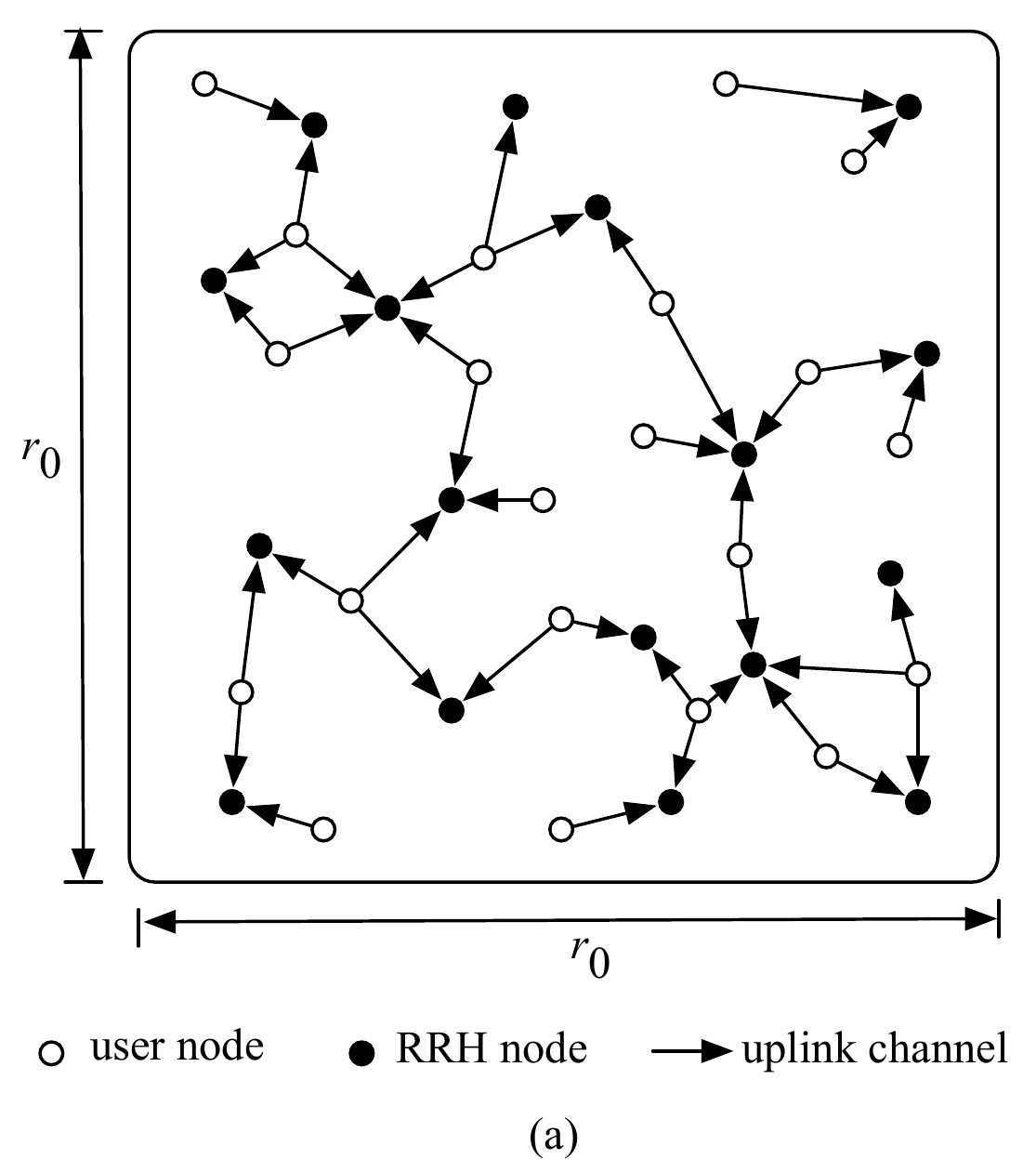}
	\hspace{50pt}
		\includegraphics[width = 0.28\textwidth]{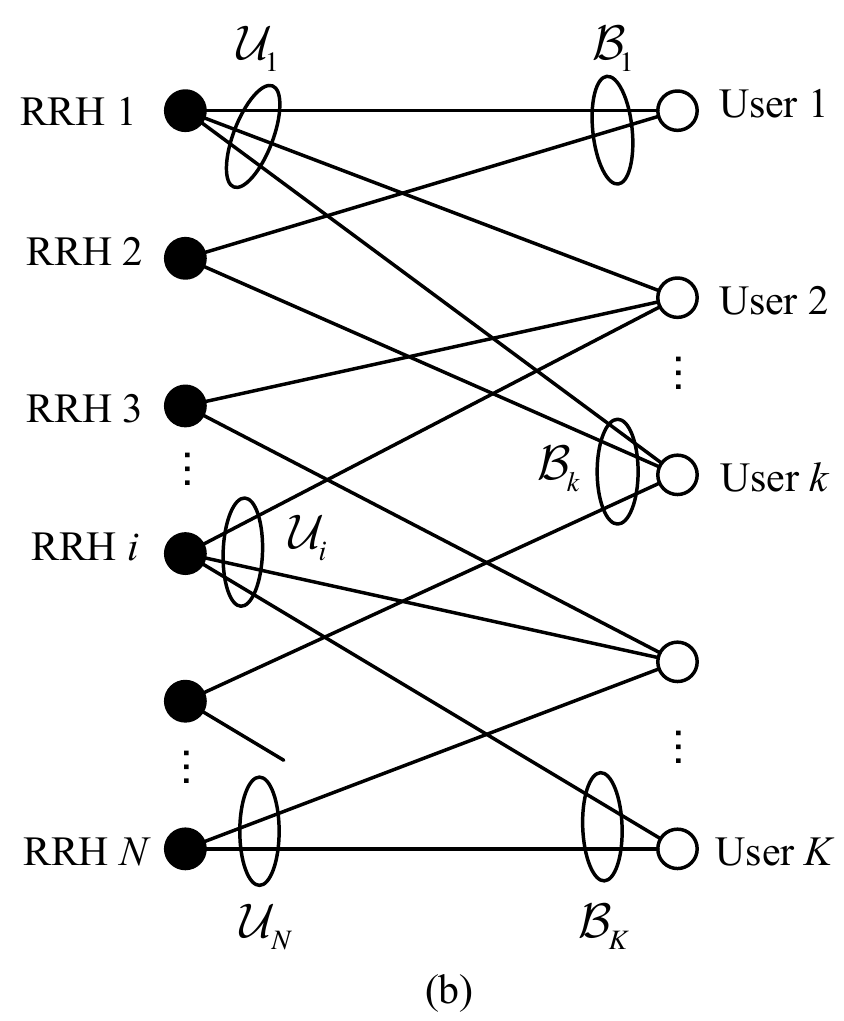}
		\caption{(a) User association after channel sparsification. (b) The bipartite graph representation of user association after channel sparsification.}\label{Fig:Channel_Sparsification}
	\end{figure}
	
	With the above channel sparsification, the received signal at RRH $i$ in \eqref{equ:Training Model} can be rewritten as
	\begin{equation}\label{equ:Training Model2}
		\mathbf{y}_{i}^\text{p} = \sum_{k\in \mathcal{U}_i} h_{i,k} \gamma_{i,k} \mathbf{x}_{k}^\text{p} + \sum_{k\in \mathcal{U}_i^c} h_{i,k} \gamma_{i,k} \mathbf{x}_{k}^\text{p} + \mathbf{z}_{i}^\text{p}.
	\end{equation}
	We aim to minimize the multiuser interference in the first term of the right-hand-side of \eqref{equ:Training Model2}, while the interference in the second term is ignored as it is much weaker than the one in the first term. To this end, the training sequences $\{\mathbf{x}_{k}^\text{p},k\in \mathcal{U}_i\}$ should be mutually orthogonal for any $i\in \mathcal{B}$. This gives a formal definition of \emph{local orthogonality}.  
	
\subsection{Problem Statement}\label{subsec:ProbStatement}
	The goal of this work is to design training sequences with the shortest length that preserve local orthogonality. This problem is formulated as
	\begin{subequations}\label{equ:pilot prob}
		\begin{align}
			\min_{\{\mathbf{x}_{k}^\text{p}\}} \quad &\alpha \\
			\text{s.t.} \quad         &\mathbf{x}_{k}^\text{p} \perp \mathbf{x}_{k^\prime}^\text{p}, \forall k\ne k^\prime, k \text{ and } k^\prime \in \mathcal{U}_i, \forall i\in \mathcal{B}\label{equ:local orthogonal}\\
			                          &\mathbf{x}_{1}^\text{p},\cdots,\mathbf{x}_{K}^\text{p} \in \mathbb{C}^{1\times \alpha T}
		\end{align}
	\end{subequations}
	where $\alpha$ defined in Section \ref{subsec:training phase} is the time-splitting factor for the training phase; $\mathbf{a} \perp \mathbf{b}$ means that $\mathbf{a}$ is orthogonal to $\mathbf{b}$. 
	
	It is not easy to tackle the problem in \eqref{equ:pilot prob} directly, partly due to the fact that the dimension of the search space varies with $\alpha$. In the following, we will solve \eqref{equ:pilot prob} by converting it to a graph coloring problem. In addition, the optimal $\alpha$ is a random variable depending on the random locations of RRHs and users. We will characterize the asymptotic behavior of the optimal $\alpha$ as the network size goes to infinity.
	
	The problem formulation in \eqref{equ:pilot prob} is for uplink channel estimation. We emphasize that the training design for the uplink directly carries over the downlink by swapping the roles of users and RRHs. That is, in the uplink phase, the training sequences are transmitted by users and the local orthogonality is preserved at the RRH side, while in the downlink, the training sequences are transmitted by RRHs and the local orthogonality is preserved at the user side. Moreover, if channel reciprocity is assumed, channel training is only required once, either at the RRH side or at the user side. Therefore, we henceforth focus on the training design for the uplink. 
	
\section{Training Sequence Design}\label{sec:pilot design}
	In this section, we solve problem \eqref{equ:pilot prob} based on graph coloring. We first formulate a graph coloring problem that is equivalent to problem \eqref{equ:pilot prob}.
\begin{figure}[!t]
	\centering
	\includegraphics[width = 0.29\textwidth]{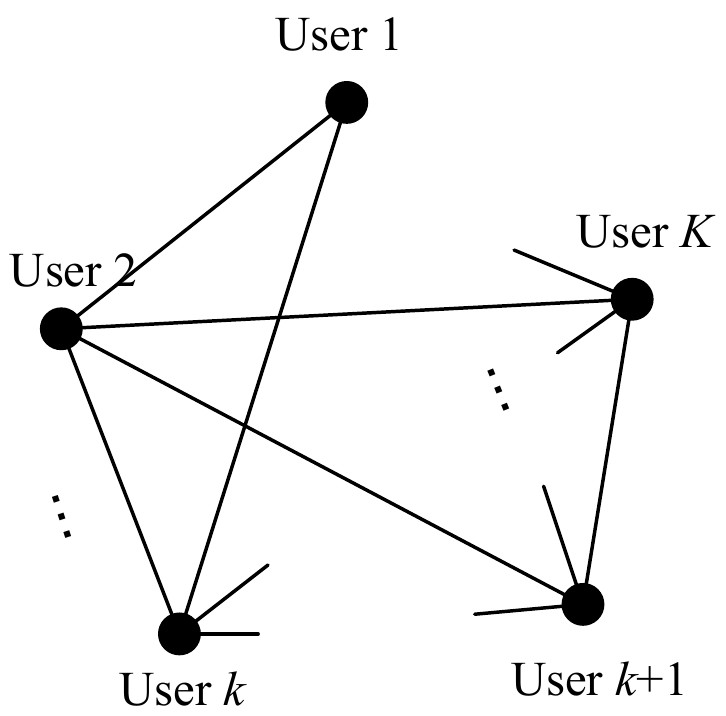}
	\caption{A new graphical representation of the graph in Fig. \ref{Fig:Channel_Sparsification}.}\label{Fig:Vertex_Coloring}
\end{figure}
 
	In Fig. \ref{Fig:Vertex_Coloring}, we define a new graph $G = \{\mathcal{U}, E\}$ with vertex set $\mathcal{U}$ and edge set $E$, where two users $k$ and $m$ in $\mathcal{U}$ are connected by an edge $e_{k,m} \in E$ if and only if they are served by a common RRH. Then, the edge set $E$ can be represented as $E = \{e_{k,m} | \mathcal{B}_k \cap \mathcal{B}_m \neq \emptyset, \forall k\neq m, k, m \in \mathcal{U} \}$. Denote by $c: \mathcal{U} \rightarrow \mathcal{C}$ a map from each user $k\in \mathcal{U}$ to a color $c(k)\in \mathcal{C}$. We then formulate the following vertex coloring problem over $G$:
	\begin{subequations}\label{equ:graph coloring}
		\begin{align}
			\min_{c}    \quad & |\mathcal{C}|\\
			\text{s.t.} \quad & c(k) \neq c(m), \text{ if } \mathcal{B}_k \cap \mathcal{B}_m \neq \emptyset, \forall k\neq m, k, m \in \mathcal{U}.\label{equ:graphEdgeConst}
		\end{align}
	\end{subequations}
	Note that the solution to \eqref{equ:graph coloring}, denoted by $\chi(G)$, is referred to as the chromatic number of the graph in Fig. \ref{Fig:Vertex_Coloring}. We further have the following result.
	\begin{theorem}\label{theorem:probEqu}
		The vertex coloring problem over $G$ in \eqref{equ:graph coloring} is equivalent to the training design problem in \eqref{equ:pilot prob}.
	\end{theorem}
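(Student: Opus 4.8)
The plan is to establish the equivalence in two directions by exhibiting explicit maps between feasible solutions of the two problems that preserve optimality of the objective. The key observation is that a set of vectors in $\mathbb{C}^{1\times \alpha T}$ can be made pairwise orthogonal exactly when the number of distinct ``labels'' we need does not exceed the dimension $\alpha T$: vectors sharing a label may be identical (or arbitrary), while vectors with distinct labels can be assigned distinct members of an orthonormal basis. Thus a coloring with $|\mathcal{C}|$ colors should correspond to a locally orthogonal training design with $\alpha T = |\mathcal{C}|$, and minimizing $\alpha$ (equivalently $\alpha T$, since $T$ is fixed) matches minimizing $|\mathcal{C}|$.

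First I would show that any feasible coloring $c$ for \eqref{equ:graph coloring} yields a feasible training design for \eqref{equ:pilot prob} with $\alpha T = |\mathcal{C}|$. Given $c:\mathcal{U}\to\mathcal{C}$, fix an orthonormal set $\{\mathbf{e}_1,\dots,\mathbf{e}_{|\mathcal{C}|}\}\subset \mathbb{C}^{1\times|\mathcal{C}|}$, identify $\mathcal{C}$ with $\{1,\dots,|\mathcal{C}|\}$, and set $\mathbf{x}_k^\text{p} = \sqrt{\beta_k P_0 \,\alpha T}\,\mathbf{e}_{c(k)}$ so that the training power constraint \eqref{equ:Training PC} holds with equality. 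For any RRH $i\in\mathcal{B}$ and any two distinct users $k,m\in\mathcal{U}_i$, we have $i\in\mathcal{B}_k\cap\mathcal{B}_m$, so the edge constraint \eqref{equ:graphEdgeConst} forces $c(k)\neq c(m)$, hence $\mathbf{x}_k^\text{p}\perp\mathbf{x}_m^\text{p}$; this is exactly \eqref{equ:local orthogonal}. Therefore $\chi(G)$ colors give a feasible point of \eqref{equ:pilot prob} with $\alpha = \chi(G)/T$.

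Conversely, I would show that any feasible training design with training length $\alpha T$ gives a feasible coloring using at most $\alpha T$ colors. Here the natural move is to let the ``color'' of user $k$ record enough information to detect orthogonality conflicts; concretely, within each RRH neighborhood $\mathcal{U}_i$ the vectors $\{\mathbf{x}_k^\text{p}:k\in\mathcal{U}_i\}$ are pairwise orthogonal and nonzero (nonzero because of the power constraint, assuming $\beta_k>0$), so $|\mathcal{U}_i|\le \alpha T$ for every $i$. This local bound means the graph $G$ can be properly colored with $\alpha T$ colors: indeed, whenever $\mathcal{B}_k\cap\mathcal{B}_m\neq\emptyset$ the users $k,m$ lie together in some $\mathcal{U}_i$, and any clique of $G$ is contained in such a set; more carefully one argues that a proper coloring with $\alpha T$ colors exists because one can directly read off a coloring from an orthonormal-basis decomposition—expand each $\mathbf{x}_k^\text{p}$ in a fixed orthonormal basis of $\mathbb{C}^{1\times\alpha T}$ and use the support structure, or simply invoke that locally orthogonal vectors within each neighborhood can be relabeled by basis indices consistently. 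Either way, one obtains a feasible $c$ with $|\mathcal{C}|\le\alpha T$.

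Combining the two directions, the optimal values satisfy $\chi(G) = \alpha^\star T$, and moreover optimal solutions of one problem are transformed into optimal solutions of the other by the maps above, which is the claimed equivalence. The main obstacle I anticipate is the reverse direction: going from arbitrary orthogonal vectors back to a consistent global coloring is not merely the per-RRH bound $|\mathcal{U}_i|\le\alpha T$—one must produce a single color assignment that is simultaneously proper for all neighborhoods. The clean way to handle this is to note that the training vectors themselves already furnish such an assignment: for each user $k$ pick the index of a basis vector ``aligned'' with $\mathbf{x}_k^\text{p}$ after an appropriate orthonormalization, so that users forced to be orthogonal necessarily receive different indices; formalizing this alignment (and checking it uses no more than $\alpha T$ labels) is the one nontrivial step, and it is where I would spend the most care.
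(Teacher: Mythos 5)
Your forward direction (a proper coloring with $|\mathcal{C}|$ colors yields feasible training sequences of length $\alpha T=|\mathcal{C}|$ by assigning scaled orthonormal basis vectors to color classes, and two users sharing an RRH are adjacent in $G$ hence differently colored hence orthogonal) is correct, and it is essentially all that the paper's own proof contains: the paper identifies each color with a member of a fixed orthonormal family and observes that constraint \eqref{equ:graphEdgeConst} then coincides with \eqref{equ:local orthogonal}. In other words, the paper tacitly restricts the design space of \eqref{equ:pilot prob} to assignments of sequences drawn from one common orthonormal set, under which the two problems are in immediate bijection.

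The gap is exactly where you said you would spend the most care, and the fix you sketch does not close it. The reverse direction requires: if there exist nonzero vectors $\mathbf{x}_{1}^\text{p},\dots,\mathbf{x}_{K}^\text{p}\in\mathbb{C}^{1\times \alpha T}$ satisfying \eqref{equ:local orthogonal}, then $\chi(G)\le \alpha T$. This is the assertion that the minimum dimension of an orthogonal representation of $G$ (its orthogonality dimension) is at least $\chi(G)$, and that is false for general graphs: feasible vectors need not be parallel to distinct elements of any single orthonormal basis, and the orthogonality dimension can be strictly --- even exponentially --- smaller than the chromatic number (the orthogonality graph on $\{\pm 1\}^{n}$ is the standard example: the vertices are their own orthogonal representation in dimension $n$, yet the chromatic number grows exponentially in $n$). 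Consequently your ``alignment'' device, which picks for each $\mathbf{x}_{k}^\text{p}$ the index of a basis vector it is aligned with, is not well defined, and the per-neighborhood bound $|\mathcal{U}_i|\le \alpha T$ only bounds cliques contained in a single $\mathcal{U}_i$; it bounds neither $\omega(G)$ (cliques of $G$ need not lie in one $\mathcal{U}_i$) nor, a fortiori, $\chi(G)$. Since your argument uses no property of the particular geometric graph $G$, it cannot be completed as written. To make the theorem literally true one must either restrict \eqref{equ:pilot prob} to sequences chosen from a fixed orthonormal family (as the paper implicitly does, and as its Section IV construction confirms), or read ``equivalent'' as the one-sided statement that $\chi(G)$ is an achievable training length. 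A further formal quibble you inherit from the paper: because \eqref{equ:Training PC} is an inequality, all-zero sequences are feasible for \eqref{equ:pilot prob} as written, so nonzeroness of the $\mathbf{x}_{k}^\text{p}$ must be imposed separately, as you correctly note.
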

	
	\begin{proof}
		Each color can be seen as an orthogonal training sequence. Then, the color set $\mathcal{C}$ can be mapped into a set of orthogonal training sequences $\{\mathbf{x}_{k}^\text{p}\}$. The cardinality of $\mathcal{C}$ equals to the number of orthogonal training sequences, i.e., $|\mathcal{C}| = \alpha T$. From \eqref{equ:graphEdgeConst}, the statement that any two vertices connected by an edge are colored differently is equivalent to the statement that any two users served by a common RRH transmit orthogonal training sequences. Then, as users in $\mathcal{U}_i$ are all served by RRH $i$, any two users in $\mathcal{U}_i$ must be connected by an edge in the new graph $G$. This is equivalent to say that the training sequences assigned to users in $\mathcal{U}_i$ are orthogonal to each other. Therefore, \eqref{equ:graphEdgeConst} is equivalent to \eqref{equ:local orthogonal}, which concludes the proof. 
	\end{proof}
	
	We now discuss solving the vertex coloring problem in \eqref{equ:graph coloring}. This is a well-known NP-complete problem \cite{Garey1979, Clark90}. Exact solutions can be found, e.g., using the algorithms proposed in \cite{Diaz2006, Sager1991}. However the running time of these algorithms is acceptable only when the corresponding graph has a relatively small size. For a large-size graph as in a CRAN, algorithms in \cite{Malaguti2008, Brelaz1979} are preferable to yield suboptimal solutions with much lower complexity. In this paper, we adopt in the simulations the \emph{dsatur algorithm}, which is a greedy-based low-complexity algorithm with near optimal performance \cite{Brelaz1979}. The dsatur algorithm dynamically chooses the vertex with the maximum saturation degree\footnote{For a vertex $u$, the (ordinary) degree is defined as the number of vertices connected to $u$; the saturation degree is defined as the number of vertices in distinct colors connected to $u$.} to color at each step, and prefers to use the existing colors to color the next vertex. For completeness, we describe the \emph{dsatur algorithm} in Table \ref{tab:algorithm1}.
	
	\begin{table}[!h]
    \caption{
    \textbf{Algorithm I}: The dsatur algorithm for graph coloring.}
    \label{tab:algorithm1}
    \centering
    \begin{tabular}{l}
    \hline
    \textbf{Initial:} $\mathcal{A} = \{k = 1,\dots,K\}$ and $\mathcal{C} = \emptyset$.\\
    \textbf{while} $\mathcal{A} \neq \emptyset$\\
    $\ \ $ Step 1: Select $k\in \mathcal{A}$ with maximum saturation degree.\\
    $\quad$$\quad$ If two vertexes have the same maximum saturation\\
    $\quad$$\quad$ degree, choose the one with maximum ordinary\\
    $\quad$$\quad$ degree.\\
    $\ \ $ Step 2: Color $k$ greedily so that\\
    $\quad\quad c(k) = \min \{i\in \mathcal{C}| c(m) \neq i, \forall m \in \{m| e_{k,m}\in E\} \}$\\
    $\quad$$\quad$ \textbf{if} $c(k) = \emptyset$ \\
    $\quad$$\quad$$\quad$ $c(k) = |\mathcal{C}|+1$ \\
    $\quad$$\quad$$\quad$ $\mathcal{C} = \mathcal{C}\cup \{c(k)\}$\\
    $\quad$$\quad$ \textbf{end if} \\
    $\ \ $ Step 3: $\mathcal{A} = \mathcal{A} \backslash k$\\
    \textbf{end while}\\
    \hline
    \end{tabular}
    \end{table}
	
	We now discuss the construction of training sequences $\{\mathbf{x}_{k}^\text{p}\}$ for problem \eqref{equ:pilot prob} based on the coloring pattern $c$ and the chromatic number $\chi(G)$ obtained from solving \eqref{equ:graph coloring}. We first generate $\chi(G)$ orthonormal training sequences of length $\chi(G)$, i.e. $\tilde{\mathbf{x}}_{i}^\text{p} (\tilde{\mathbf{x}}_{i}^\text{p})^\text{H} = 1,\forall i, \text{ and } \tilde{\mathbf{x}}_{i}^\text{p} (\tilde{\mathbf{x}}_{j}^\text{p})^\text{H} = 0,\forall i,j \in \{1,\dots,\chi(G)\} \text{ with } i\neq j$. Then, the training sequence $\mathbf{x}_{k}^\text{p}$ for user $k$ is scaled to meet the power constraint in \eqref{equ:Training PC}, i.e. $\mathbf{x}_{k}^\text{p} = \sqrt{\chi(G)\beta_k P_0} \tilde{\mathbf{x}}_{c(k)}^\text{p},\ k \in \mathcal{U}$.
	
\section{Optimal Training Length}\label{sec:Asymptotic}
	In the preceding section, we proposed a training design algorithm for the problem in \eqref{equ:pilot prob}, and obtained that the minimum training length $\alpha T$ is given by the chromatic number $\chi(G)$. In this section, we focus on the behavior of the training length as the network size increases. We show that the training length scales at most at the rate of $O(\ln K)$ as the network size goes to infinity under the assumption of a fixed user density.   
	
\subsection{Graph with Infinite RRHs}
	Our analysis is based on the theory of random geometric graph. Recall from \eqref{equ:sparsification} that the edge generation of graph $G$ in Fig. \ref{Fig:Vertex_Coloring} follows the rule below:
	\begin{equation}
		E = \{e_{k,m} | \exists i, \|\mathbf{u}_k-\mathbf{b}_i\|_\infty <r \text{ and } \|\mathbf{u}_m-\mathbf{b}_i\|_\infty <r\}.
	\end{equation}
	This is unfortunately different from the edge-generation rule of a random geometric graph \cite{Martin1997}. To circumvent this obstacle, we introduce a new graph as follows:
	\begin{equation}\label{equ:gvinfty}
		G^\infty =\{\mathcal{U},E^\infty\} \text{ with } E^\infty = \{e_{k,m}^\infty | \|\mathbf{u}_k-\mathbf{u}_m\|_\infty <2r \}.
	\end{equation}
	We have the following result.
	\begin{lemma}\label{lemma1}
		The graph $G$ is a subgraph of $G^\infty$.
	\end{lemma}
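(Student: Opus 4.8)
The plan is to reduce the statement to a single edge-set inclusion. Since $G$ and $G^\infty$ share the same vertex set $\mathcal{U}=\{1,\dots,K\}$, showing that $G$ is a subgraph of $G^\infty$ amounts to verifying $E\subseteq E^\infty$. So I would take an arbitrary edge $e_{k,m}\in E$ and show that the corresponding pair is also an edge of $G^\infty$.

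By the edge-generation rule for $G$ recalled immediately above the lemma, $e_{k,m}\in E$ means there exists an RRH index $i$ with $\|\mathbf{u}_k-\mathbf{b}_i\|_\infty<r$ and $\|\mathbf{u}_m-\mathbf{b}_i\|_\infty<r$, i.e., both users $k$ and $m$ lie in the neighborhood of RRH $i$. Writing $\mathbf{u}_k-\mathbf{u}_m=(\mathbf{u}_k-\mathbf{b}_i)+(\mathbf{b}_i-\mathbf{u}_m)$ and applying the triangle inequality for the $\ell_\infty$ norm yields
\[
\|\mathbf{u}_k-\mathbf{u}_m\|_\infty\le \|\mathbf{u}_k-\mathbf{b}_i\|_\infty+\|\mathbf{b}_i-\mathbf{u}_m\|_\infty<r+r=2r .
\]
Hence $e_{k,m}^\infty\in E^\infty$ by the definition in \eqref{equ:gvinfty}. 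Since $e_{k,m}$ was arbitrary, $E\subseteq E^\infty$, and therefore $G$ is a subgraph of $G^\infty$.

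I do not anticipate any genuine obstacle: the only tool needed is the triangle inequality, which holds for every norm and in particular for $\ell_\infty$. The one subtlety worth flagging is that the inclusion is strictly one-directional — two users within $\ell_\infty$-distance $2r$ of each other need not be served by a common RRH — so $G$ may be a \emph{proper} subgraph of $G^\infty$. This is precisely why the result is phrased as a subgraph relation rather than an equality, and it is the form needed later, since $\chi(G)\le\chi(G^\infty)$ will let us upper-bound the training length by analyzing the random geometric graph $G^\infty$.
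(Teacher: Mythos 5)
Your proof is correct and follows exactly the same route as the paper's: reduce the subgraph claim to the edge-set inclusion $E\subseteq E^\infty$, then apply the triangle inequality for the $\ell_\infty$ norm to the common RRH $\mathbf{b}_i$ witnessing the edge in $G$. No gaps; the remark that the inclusion can be proper is a nice (though optional) observation.
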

	\begin{proof}
		Since $G$ and $G^\infty$ have a common vertex set $\mathcal{U}$, we only need to prove $E \subseteq E^\infty$. For any $e_{k,m} \in E$, we have $\|\mathbf{u}_k-\mathbf{u}_m\|_\infty =\|\mathbf{u}_k-\mathbf{b}_i + \mathbf{b}_i-\mathbf{u}_m\|_\infty \leq \|\mathbf{u}_k-\mathbf{b}_i\|_\infty + \|\mathbf{u}_m-\mathbf{b}_i\|_\infty < 2r$. Hence, $e_{k,m} \in E^\infty$. 
	\end{proof}
	
	Graph $G^\infty$ can be seen as a supergraph\footnote{A supergraph is a graph formed by adding vertices, edges, or both to a given graph. If $H$ is a subgraph of $G$, then $G$ is a supergraph of $H$.} of $G$ with infinite RRHs, i.e., there always exists an RRH located between two users provided that the distance between these two users does not exceed $2r$. Since $G$ is a subgraph of $G^\infty$, the chromatic number $\chi(G^\infty)$ serves as an upper bound for $\chi(G)$.
		
\subsection{Asymptotic Behavior of the Training Length}
	We are now ready to characterize the asymptotic behavior of the training length as $K\rightarrow \infty$. Denote by $\delta = \frac{K}{r_0^2}$ the user density in the service area. We have the following theorem.

	\begin{theorem}\label{theorem:Asym}
		As $K\rightarrow\infty$ and $\frac{\delta r^2}{\ln K}\rightarrow \rho$, the minimum training length $\alpha T$ preserving local orthogonality satisfies 
		\begin{equation}\label{equ:AsymChrom}
			\limsup_{K\rightarrow \infty} \left( \frac{\alpha T}{\delta r^2} \right) \leq 4f^{-1}\left(\frac{1}{4\rho}\right),\ a.s.,
		\end{equation}
		where $\rho \in (0, +\infty)$ is a predetermined constant, and $f^{-1}$ is the inverse function of $f(x)$ over the domain $[1,+\infty)$ with $f(x)$ defined as below
		\begin{equation}\label{equ:f_fun}
			f(x) = 1-x+x\ln x, \quad x>1
		\end{equation}
		and $a.s.$ stands for almost surely.
	\end{theorem}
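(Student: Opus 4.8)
The plan is to bound the minimum training length in two stages. First, by the construction in Section~\ref{sec:pilot design} the minimum training length preserving local orthogonality equals the chromatic number $\chi(G)$, and by Lemma~\ref{lemma1} we have $\chi(G)\le\chi(G^\infty)$; so it suffices to prove $\limsup_{K\to\infty}\chi(G^\infty)/(\delta r^2)\le 4f^{-1}(1/(4\rho))$ almost surely. The reason for passing to $G^\infty$ of \eqref{equ:gvinfty} is that its edge rule $\|\mathbf{u}_k-\mathbf{u}_m\|_\infty<2r$ is exactly that of a random geometric graph on the $K$ i.i.d.\ uniform points $\{\mathbf{u}_k\}$ in the $\ell_\infty$ metric with connection radius $2r$ --- which is precisely why the $\ell_\infty$ norm was adopted in \eqref{equ:sparsification} --- so the theory of \cite{Martin1997} (see also \cite{RGG}) becomes applicable.

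Second, I would identify the limiting constant by way of the clique number $\omega(G^\infty)$. In the $\ell_\infty$ metric a set of vertices is a clique of $G^\infty$ if and only if the corresponding points lie in a common axis-aligned half-open square of side $2r$; hence $\omega(G^\infty)$ equals the largest number of the $K$ points falling in any such square, whose area is $(2r)^2=4r^2$. The expected occupancy of a fixed $2r$-square is therefore $\mu:=\delta(2r)^2=4\delta r^2$, and $\mu/\ln K\to 4\rho$ by hypothesis. A Chernoff bound for the binomial occupancy gives, for every $x\ge 1$, $\Pr[\text{occupancy}\ge x\mu]\le e^{-\mu f(x)}$ with $f$ as in \eqref{equ:f_fun}; a union bound over an $O(K/(\varepsilon^2\mu))$-element grid of slightly enlarged squares (each $2r$-square being contained in a grid square of side $(2+2\varepsilon)r$) shows that, outside an event of probability $K^{1-4\rho f(x)+o(1)}$, every $2r$-square holds at most $(1+\varepsilon)^2 x\mu$ points. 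Since $f$ is strictly increasing on $[1,\infty)$ with $f(1)=0$, the condition $4\rho f(x)>1$ reads $x>f^{-1}(1/(4\rho))$; a Borel--Cantelli argument along a geometric subsequence (with the routine interpolation between subsequence values) then yields $\limsup_{K}\omega(G^\infty)/(\delta r^2)\le 4(1+\varepsilon)^2 f^{-1}(1/(4\rho))$ a.s., and letting first $x\downarrow f^{-1}(1/(4\rho))$ and then $\varepsilon\downarrow 0$ gives $\limsup_{K}\omega(G^\infty)/(\delta r^2)\le 4f^{-1}(1/(4\rho))$ a.s.

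Finally, I would upgrade this clique estimate to the required chromatic estimate by invoking the fact that, in the connectivity-type regime $\delta r^2\asymp\ln K$ isolated by the hypothesis $\delta r^2/\ln K\to\rho\in(0,\infty)$, the chromatic number and the clique number of a random geometric graph are asymptotically equal, $\chi(G^\infty)=(1+o(1))\,\omega(G^\infty)$ a.s.\ \cite{Martin1997,RGG}. Chaining $\alpha T=\chi(G)\le\chi(G^\infty)\le(1+o(1))\,\omega(G^\infty)$ with the bound just derived yields \eqref{equ:AsymChrom}. I expect the comparison $\chi\sim\omega$ to be the main obstacle: an elementary colour-reuse scheme that tiles the service area by $2r$-squares and properly colours the resulting king graph loses a constant factor --- it gives $16f^{-1}(1/(4\rho))$ rather than $4f^{-1}(1/(4\rho))$ --- so the sharp random-geometric-graph result is genuinely needed; a secondary, bookkeeping difficulty is making ``almost surely'' precise along the sequence of point configurations indexed by $K$, which the fine-grid/Borel--Cantelli estimate in the previous paragraph is meant to handle.
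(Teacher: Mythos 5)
Your proposal is correct and follows essentially the same route as the paper: reduce $\alpha T=\chi(G)\le\chi(G^\infty)$ via Lemma~\ref{lemma1}, bound the clique number of the random geometric graph $G^\infty$ (the paper simply cites Theorem~2 of \cite{Martin1997}, where you re-derive it via the square-occupancy/Chernoff/Borel--Cantelli argument and arrive at the same constant $4f^{-1}(1/(4\rho))$), and finish with $\chi(G^\infty)/\omega(G^\infty)\rightarrow 1$ a.s. The only nit is attribution: that last equivalence comes from \cite{McDiarmid2005}, not from \cite{Martin1997} or the elementary estimates, since \cite{Martin1997} only treats degree and clique statistics.
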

	
	\begin{proof} 
	From Lemma \ref{lemma1}, $G$ is a subgraph of $G^\infty$, and $\chi(G)$ is upper bounded by $\chi(G^\infty)$. As the minimum $\alpha T$ is given by $\chi(G)$, it suffices to show
	\begin{equation}
		\limsup_{K\rightarrow \infty} \left( \frac{\chi(G^\infty)}{\delta r^2} \right) \leq 4f^{-1}\left(\frac{1}{4\rho}\right),\ a.s.
	\end{equation}

	Denote by $\omega(G^\infty)$ the clique number\footnote{A complete graph is a graph where every pair of distinct vertices are connected by a unique edge. The clique number is defined as the number of vertices of the largest complete subgraph of $G^\infty$.} of $G^\infty$. From Theorem 2 in \cite{Martin1997}, we have\footnote{To invoke Theorem 2 in \cite{Martin1997}, the distance threshold in \cite{Martin1997} is set to $\frac{2r}{r_0}$.}
		\begin{equation}\label{equ:cliqueLimit}
			\limsup_{K\rightarrow \infty} \left( \frac{\omega(G_v^\infty)}{\delta r^2} \right) \leq 4f^{-1}\left(\frac{1}{4\rho}\right),\ a.s.
		\end{equation}
		The clique number $\omega(G^\infty)$ serves as a lower bound for $\chi(G^\infty)$ in general. However, as $K\rightarrow \infty$, the clique number almost surely converges to the chromatic number \cite{McDiarmid2005}, i.e.
		\begin{equation}\label{equ:asyequ}
			\lim_{K\rightarrow \infty} \frac{\chi(G^\infty)}{\omega(G^\infty)} = 1,\ a.s.
		\end{equation}
		Combining \eqref{equ:cliqueLimit} and \eqref{equ:asyequ}, we obtain 
	\begin{equation}
		\limsup_{K\rightarrow \infty} \left( \frac{\chi(G^\infty)}{\delta r^2} \right) \leq 4f^{-1}\left(\frac{1}{4\rho}\right),\ a.s.
	\end{equation}
	This completes the proof.
	\end{proof}

	\emph{Remark 1}: We note that the upper bound of the minimum training length in Theorem \ref{theorem:Asym} is independent of the number of RRHs $N$. For a finite $N$, the minimum training length increases with $N$, and is upper bounded by $\chi(G^\infty)$. 

	\emph{Remark 2}: From Theorem \ref{theorem:Asym}, with given $\rho$ and $r$, the minimum training length to preserve local orthogonality scales in the order of $\ln K$, provided that the user density $\delta = O(\ln K)$.
	
	From Theorem \ref{theorem:Asym}, we also claim the following result.
	
	\begin{corollary}
		As $K\rightarrow \infty$ and $r_0^2 \rightarrow \infty$ with $\delta = \frac{K}{r_0^2}$ fixed, the minimum training length to preserve local orthogonality scales at most in the order of $\ln K$.
	\end{corollary}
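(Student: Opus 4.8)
The plan is to deduce the corollary from Theorem~\ref{theorem:Asym} by a monotonicity argument, because Theorem~\ref{theorem:Asym} cannot be applied directly in this regime: if both the density $\delta$ and the sparsification threshold $r$ are held fixed while $r_0^2=K/\delta\to\infty$, then $\frac{\delta r^2}{\ln K}\to 0$, so there is no positive constant $\rho\in(0,+\infty)$ to which the hypothesis can converge. The idea is to replace the fixed threshold by one that grows just fast enough to keep this ratio pinned at a positive constant, while still dominating the original $r$.

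Concretely, I would first fix an arbitrary $\rho\in(0,+\infty)$ and, for each $K$, introduce the auxiliary threshold $r_K'$ with $(r_K')^2=\frac{\rho\ln K}{\delta}$ together with the graph $G^{\infty\prime}_K=\{\mathcal{U},E^{\infty\prime}_K\}$, $E^{\infty\prime}_K=\{e_{k,m}\mid\|\mathbf{u}_k-\mathbf{u}_m\|_\infty<2r_K'\}$, i.e. the graph of the form \eqref{equ:gvinfty} but with $r_K'$ in place of $r$. Since $\delta$ and $\rho$ are fixed while $(r_K')^2\to\infty$, we have $r_K'>r$ for all sufficiently large $K$; evaluated on the same realization of the node positions this gives $E^\infty\subseteq E^{\infty\prime}_K$, so $G^\infty$ is a subgraph of $G^{\infty\prime}_K$. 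Combining this with Lemma~\ref{lemma1} yields the chain $\chi(G)\le\chi(G^\infty)\le\chi(G^{\infty\prime}_K)$ for all $K$ large.

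Next I would apply Theorem~\ref{theorem:Asym} — equivalently, the chain \eqref{equ:cliqueLimit}--\eqref{equ:asyequ} in its proof — with $r$ replaced by $r_K'$. By construction $\frac{\delta(r_K')^2}{\ln K}=\rho$ for every $K$, so the hypothesis holds and
\begin{equation*}
	\limsup_{K\to\infty}\left(\frac{\chi(G^{\infty\prime}_K)}{\delta(r_K')^2}\right)\le 4f^{-1}\!\left(\frac{1}{4\rho}\right),\ a.s.
\end{equation*}
Since $\delta(r_K')^2=\rho\ln K$, this says $\chi(G^{\infty\prime}_K)\le\big(4\rho f^{-1}(\tfrac{1}{4\rho})+o(1)\big)\ln K$ almost surely, i.e. $\chi(G^{\infty\prime}_K)=O(\ln K)$. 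Feeding this into the subgraph chain above, the minimum training length $\alpha T=\chi(G)$ satisfies $\alpha T=O(\ln K)$ almost surely, which is the claim.

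The one point I would be most careful about is exactly this passage from a fixed threshold to a growing one: one has to recognize that Theorem~\ref{theorem:Asym} is vacuous when $r$ is fixed (it needs $\rho>0$), and then choose the inflated radius $r_K'\propto\sqrt{\ln K}$ so that it simultaneously (i) satisfies $\delta(r_K')^2/\ln K\equiv\rho>0$, and (ii) eventually exceeds the fixed $r$, so that the monotonicity $\chi(G)\le\chi(G^{\infty\prime}_K)$ is valid. The remaining ingredients are routine: the subgraph inclusions are on a common node configuration, so the almost-sure conclusion of Theorem~\ref{theorem:Asym} (which rests on the random geometric graph results of \cite{Martin1997} and the clique/chromatic convergence of \cite{McDiarmid2005}) transfers without any coupling subtlety.
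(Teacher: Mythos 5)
Your proof is correct, but it takes a different route from the paper's. The paper keeps the threshold $r$ fixed and instead dominates the $K$-user configuration by a denser one: it observes that the fixed-density graph can be obtained by randomly deleting vertices from a configuration with density $\delta = O(\ln K)$, applies Theorem~\ref{theorem:Asym} to that denser graph (where the hypothesis $\delta r^2/\ln K \to \rho$ does hold), and uses the fact that vertex deletion cannot increase the chromatic number. You instead keep the point configuration fixed and inflate the radius to $r_K' \propto \sqrt{\ln K}$ so that $\delta (r_K')^2/\ln K \equiv \rho$, then use edge-monotonicity $\chi(G) \le \chi(G^\infty) \le \chi(G^{\infty\prime}_K)$. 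Both are monotonicity arguments that embed $G$ into a supergraph in the regime where Theorem~\ref{theorem:Asym} applies, and both correctly identify that the theorem is not directly applicable when $\delta$ and $r$ are both fixed. Your version has the advantage of working on a single realization of the node positions throughout, so there is no coupling between two point processes to justify; the paper's ``randomly deleting users'' step implicitly requires such a coupling (harmless by exchangeability of uniform points, but left unstated). The paper's version keeps the physical sparsification threshold $r$ unchanged, which is perhaps more natural from a system-design standpoint. Either way the conclusion $\alpha T = \chi(G) = O(\ln K)$ almost surely follows.
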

	
	\begin{proof}
		Note that a graph with a fixed user density $\delta$ can be generated by randomly deleting users from a graph with user density $\delta = O(\ln K)$. This implies that the minimum training length for the case of $\delta$ fixed is upper bounded by that of the case of $\delta = O(\ln K)$. From Theorem \ref{theorem:Asym}, when $\delta = O(\ln K)$, the minimum training length scales in the order of $\ln K$. Therefore, the minimum training length for a fixed $\delta$ scales at most in the order of $\ln K$, which completes the proof.
	\end{proof}
		
	\emph{Remark 3}: Corollary 1 indicates that the minimum training length of our proposed training design scheme is moderate even for a large-scale CRAN.

\subsection{Further Discussions}
	In the above discussions, we assume that the chromatic number of $G$ can be determined accurately. However, the algorithms proposed in \cite{Diaz2006, Sager1991} to find the chromatic number always have high computation complexity and are not suitable for practical systems. As mention in Section \ref{sec:pilot design}, a greedy algorithm called \emph{the dsatur algorithm} is applied in this paper, which cannot be guaranteed to achieve the chromatic number but with relatively low complexity. However, the problem arises whether the number of colors used by a suboptimal coloring algorithm still scales in the order of $\ln K$. For this problem, we have the following theorem.
	\begin{theorem}\label{proposition1}
		The training length $\alpha T$ determined by the dsatur algorithm in Table \ref{tab:algorithm1} scales at most in the order of $\ln K$ as $K\rightarrow\infty$ and $\frac{\delta r^2}{\ln K}\rightarrow \rho$.
	\end{theorem}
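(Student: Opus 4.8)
\emph{Proof plan.} The key observation I would use is that the \emph{dsatur} procedure in Table~\ref{tab:algorithm1} is a \emph{sequential greedy} coloring: when it colors a vertex $k$ (Step~2) it assigns the least color not already used by the previously colored neighbours of $k$. Hence at that moment $k$ has at most $\deg_G(k)$ forbidden colors and receives a color in $\{1,\dots,\deg_G(k)+1\}$; consequently the total number of colors it returns, which by Theorem~\ref{theorem:probEqu} equals the training length $\alpha T$ it produces, never exceeds $\Delta(G)+1$, where $\Delta(G)$ denotes the maximum degree of $G$. It therefore suffices to show $\Delta(G)=O(\ln K)$ almost surely. By Lemma~\ref{lemma1}, $G$ is a subgraph of $G^\infty$, so $\Delta(G)\le\Delta(G^\infty)$, and it is enough to bound the maximum degree of the random geometric graph $G^\infty$.

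For the latter I would relate $\Delta(G^\infty)$ to the clique number $\omega(G^\infty)$, for which the estimate \eqref{equ:cliqueLimit} is already in hand. Fix a user $k$. Every neighbour of $k$ in $G^\infty$ lies in the $\ell_\infty$-ball $\mathcal{S}_k$ of radius $2r$ about $\mathbf{u}_k$, an axis-aligned square of side $4r$. Cover $\mathcal{S}_k$ by a fixed number $c_0$ of sub-squares each of $\ell_\infty$-diameter strictly less than $2r$ (for instance $c_0=9$ sub-squares of side $\tfrac{4}{3}r$). Any two users in the same sub-square are within $\ell_\infty$-distance $<2r$, hence adjacent in $G^\infty$, so the users inside one sub-square form a clique and number at most $\omega(G^\infty)$. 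Summing over the $c_0$ sub-squares gives $\deg_{G^\infty}(k)+1\le|\mathcal{S}_k\cap\mathcal{U}|\le c_0\,\omega(G^\infty)$, and taking the maximum over $k$ yields $\Delta(G^\infty)\le c_0\,\omega(G^\infty)$. Now \eqref{equ:cliqueLimit} gives $\limsup_{K\rightarrow\infty}\omega(G^\infty)/(\delta r^2)\le 4f^{-1}(1/(4\rho))$, $a.s.$, so $\omega(G^\infty)=O(\delta r^2)$ $a.s.$; since the scaling hypothesis $\tfrac{\delta r^2}{\ln K}\rightarrow\rho$ forces $\delta r^2$ to grow like $\ln K$, this gives $\omega(G^\infty)=O(\ln K)$ $a.s.$ Chaining the inequalities, the training length output by \emph{dsatur} is at most $\Delta(G)+1\le\Delta(G^\infty)+1\le c_0\,\omega(G^\infty)+1=O(\ln K)$ almost surely, which is the claim.

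The only delicate point is the geometric covering step: one must take the sub-squares with $\ell_\infty$-diameter \emph{strictly} below $2r$ so that co-located users really do form a clique, and dispose of the measure-zero event that two users sit at $\ell_\infty$-distance exactly $2r$ (which has probability zero since the user coordinates are continuously distributed, so it can be ignored). A self-contained alternative to invoking \eqref{equ:cliqueLimit} is to estimate $\Delta(G^\infty)$ directly: conditioned on the locations, $\deg_{G^\infty}(k)$ is stochastically dominated by a $\mathrm{Binomial}(K-1,\,16r^2/r_0^2)$ variable whose mean is of order $\delta r^2=O(\ln K)$, so a Chernoff bound makes $\Pr\{\deg_{G^\infty}(k)>c\ln K\}$ smaller than any prescribed power of $K$ once $c$ is large enough, and a union bound over the $K$ users together with the Borel--Cantelli lemma gives $\Delta(G^\infty)=O(\ln K)$ $a.s.$ In either route the ``greedy coloring uses at most $\Delta+1$ colors'' reduction is standard and essentially free; all the work lies in the degree (equivalently, clique/concentration) estimate.
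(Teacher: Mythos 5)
Your proof is correct, and its skeleton coincides with the paper's: both arguments observe that the greedy coloring in Table~\ref{tab:algorithm1} never needs more than $\Delta(G)+1$ colors, and both pass to $\Delta(G)\le\Delta(G^\infty)$ via Lemma~\ref{lemma1}. Where you diverge is in how $\Delta(G^\infty)=O(\ln K)$ is established. The paper simply invokes Theorem~1 of the Appel--Russo reference on the maximum vertex degree of a random geometric graph, obtaining the explicit bound $\limsup_{K\rightarrow\infty}\Delta(G^\infty)/(\delta r^2)\le 16f^{-1}\bigl(\tfrac{1}{16\rho}\bigr)$ a.s.\ and concluding immediately. You instead reuse the clique-number estimate \eqref{equ:cliqueLimit} already needed for Theorem~\ref{theorem:Asym}, together with an elementary covering of the $\ell_\infty$-ball of radius $2r$ by nine sub-squares of $\ell_\infty$-diameter below $2r$, to get $\Delta(G^\infty)\le 9\,\omega(G^\infty)$; your alternative Chernoff--Borel--Cantelli route is likewise sound since the per-vertex degree has mean $\Theta(\delta r^2)=\Theta(\ln K)$ and the tail can be driven below any power of $K$. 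What your route buys is self-containedness --- only one external result from the random geometric graph literature is needed instead of two --- at the price of a cruder implied constant ($36 f^{-1}(\tfrac{1}{4\rho})$ versus $16f^{-1}(\tfrac{1}{16\rho})$ per unit of $\delta r^2$), which is immaterial here since the theorem only asserts the order of growth. Both arguments are valid proofs of the stated result.
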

	
	\begin{proof}
		Revisit the dsatur algorithm in Table \ref{tab:algorithm1}. In each coloring step, the vertex to be colored prefers to use an existing color. Then, the number of colors in every step is upper bounded by $\Delta(G)+1$, where $\Delta(G)$ is the maximum degree of $G$. Therefore, the number of colors used by the dsatur algorithm is upper bounded by $\Delta(G)+1$. We also see from Lemma \ref{lemma1} that $\Delta(G) \leq \Delta(G^\infty)$. Therefore, it suffices to characterize the behavior of $\Delta(G^\infty)$. From Theorem 1 in \cite{Martin1997}, we have
		\begin{equation}
			\limsup_{K\rightarrow \infty} \left( \frac{\Delta(G^\infty)}{\delta r^2} \right) \leq 16f^{-1}\left(\frac{1}{16\rho}\right),\ a.s.
		\end{equation}
		where $f^{-1}$ is given in \eqref{equ:f_fun}. Together with $\delta r^2 = O(\ln K)$, we see that $\Delta(G^\infty)$ scales in the order of $\ln K$. Thus, $\Delta(G)$ and the number of colors given by the dsatur algorithm scales at most in the order of $\ln K$, which completes the proof.
	\end{proof}
	
	Fig. \ref{Fig:Chromatic} gives the numerical results to verify our analysis. Both the chromatic number of $G^\infty$ and $G$ with $N = 1000$ are included for comparision. In simulation, the chromatic number is found by the dsatur algorithm and averaged over $1000$ random realizations. From Fig. \ref{Fig:Chromatic}, the training length $\alpha T$ is strictly upper bounded by the theoretical result given in \eqref{equ:AsymChrom}. We also note that the output of the dsatur algorithm for $G^\infty$ may be slightly greater than the upper bound given in \eqref{equ:AsymChrom}. This is due to the use of the suboptimal coloring algorithm in simulation. From Fig. \ref{Fig:Chromatic}, we see that the simulated minimum training length is close to the upper bound given in \eqref{equ:AsymChrom}.
	
	\begin{figure}[!h]
		\centering
		\includegraphics[width = 0.48\textwidth]{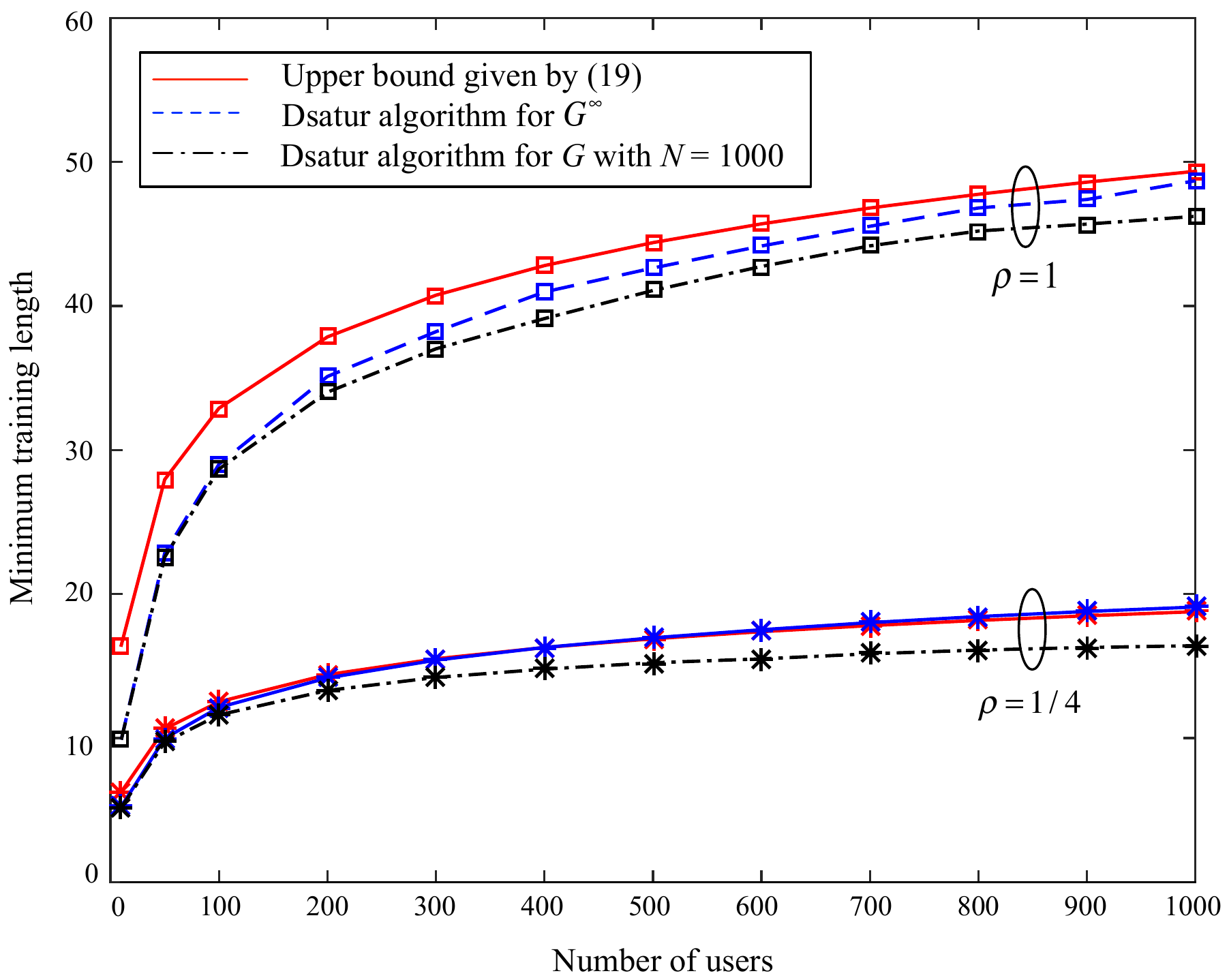}
	 	\caption{The minimum training length against the number of users $K$ with $N = 1000$.}\label{Fig:Chromatic}
	\end{figure}

\section{Practical Design}\label{sec:PracticalDesign}
	In this section, we evaluate the proposed training-based scheme using the information throughput as the performance measure. The throughput expression can be found in Appendix \ref{appendixI}.
	
\subsection{Refined Channel Sparsification}
	We first show that the channel sparsification criteria in \eqref{equ:sparsification} can be refined to improve the system throughput while keeping the minimum training length unchanged and still preserving local orthogonality. Recall the received signal after channel sparsification shown in \eqref{equ:Training Model2}. Due to local orthogonality, for each RRH $i$, the training sequences $\{\mathbf{x}_{k}^\text{p},k\in \mathcal{U}_i \}$ in the first term of \eqref{equ:Training Model2} are orthogonal and the channel coefficients $\{h_{i,k},k\in \mathcal{U}_i \}$ can be estimated, while the channel coefficients $\{h_{i,k},k\in \mathcal{U}_i^c \}$ in the second term are not estimated. However, due to the randomness of user locations, a different RRH usually serves a different number of users, which can be seen from Fig. \ref{Fig:UserNumber_pdf}. For a certain RRH, some training sequences in the second term may be orthogonal to the training sequences $\{\mathbf{x}_{k}^\text{p},k\in \mathcal{U}_i \}$ in the first term. The corresponding channel coefficients in the second term can also be estimated at RRH $i$, so as to mitigate the interference. Recall that adding more users to $\{\mathcal{U}_i \}$ is equivalent to adding more edges to the bipartite graph in Fig. \ref{Fig:Channel_Sparsification}. Thus, to suppress interference, it is desirable to add more edges to the bipartite graph without compromising local orthogonality. 
	\begin{figure}[!t]
		\centering
		\includegraphics[width = 0.48\textwidth]{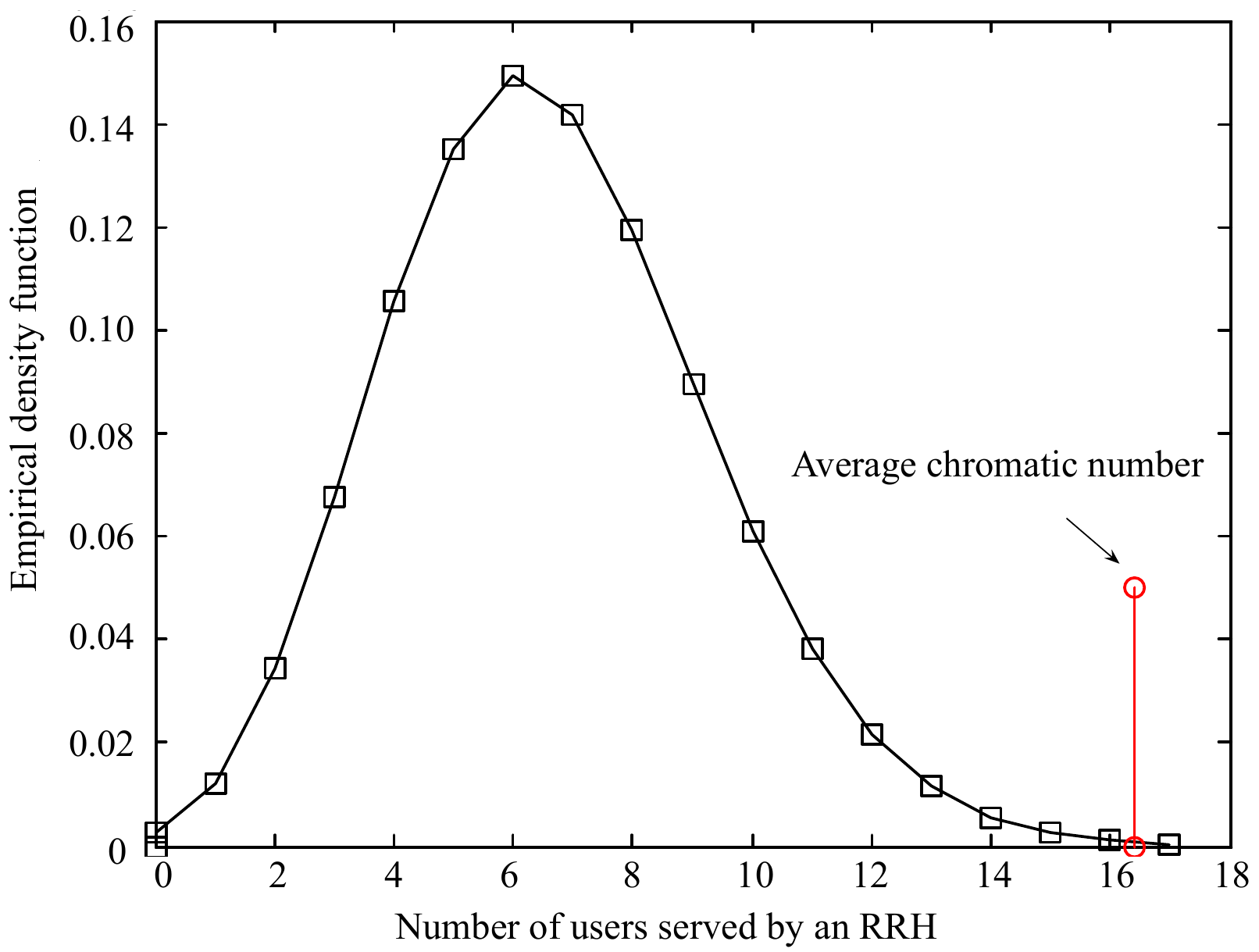}
	 	\caption{Empirical density function of the number of users served an RRH with $N = 1000$ and $K = 1000$. Every simulated point is averaged over 1000 channel samples. The average chromatic number is also included for reference.}\label{Fig:UserNumber_pdf}
	\end{figure}
	\begin{figure}[!t]
		\centering
		\includegraphics[width = 0.31\textwidth]{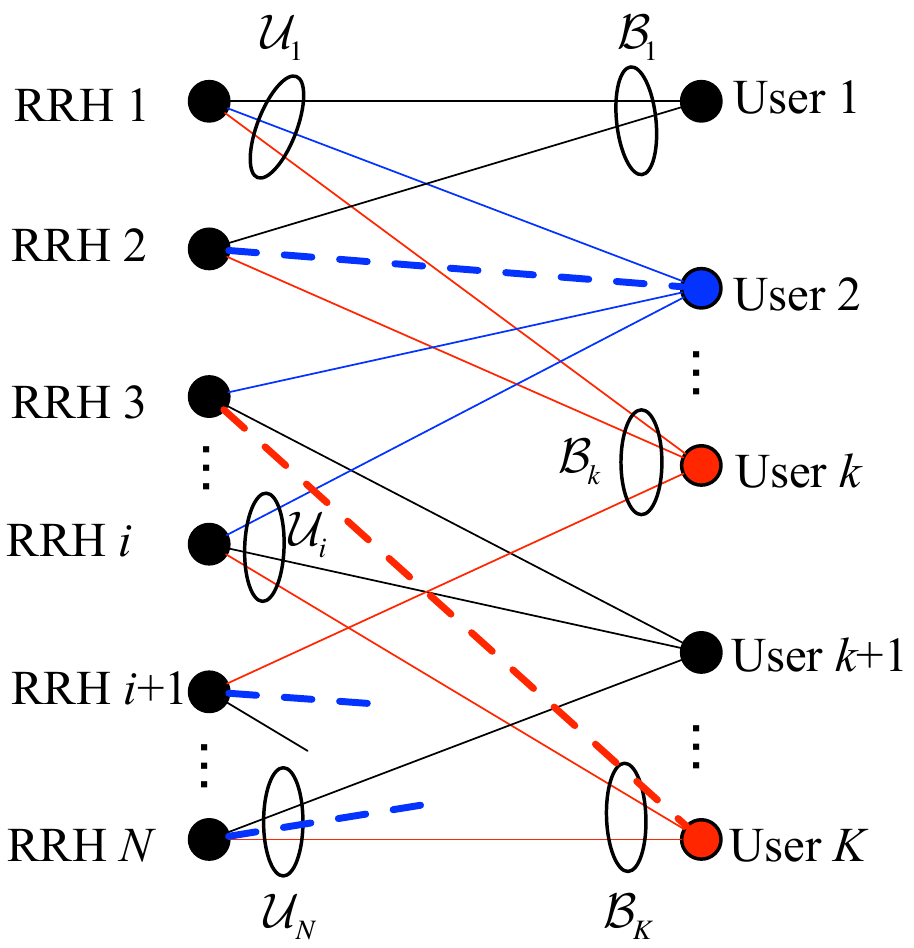}
	 	\caption{An example to illustrate the refined channel sparsification.}\label{Fig:RefindSpa}
	\end{figure}
	
	We now show how to add more users to each RRH. Denote by $c$ and $\chi(G)$ the coloring pattern and the chromatic number obtained by solving \eqref{equ:graph coloring}, respectively. Based on $c$, the user set $\mathcal{U}$ can be partitioned into $\chi(G)$ subsets each with a different color. Then, each RRH $i$ chooses the closest users colored differently from the users in $\mathcal{U}_i$. An example to this procedure is illustrated in Fig. \ref{Fig:RefindSpa}, where solid lines represent the user association after channel sparsification following the criteria in \eqref{equ:sparsification}, and the dash lines denote the new edges added to RRHs. The coloring pattern $c$ is obtained by solving problem \eqref{equ:graph coloring}. 
			
	Clearly, the refined channel sparsification in Fig. \ref{Fig:RefindSpa} still preserves local orthogonality. Hence, the analytical results in Section \ref{sec:Asymptotic} are still applicable to the refined channel sparsification and the minimum training length still follows $O(\ln K)$. Furthermore, the number of users associated with each RRH is exactly $\chi(G)$, so as to minimize the interference.  
	
\subsection{Numerical Results}\label{subsec:Numerical}
	Numerical results are presented to demonstrate the performance of our proposed scheme. Users and RRHs are uniformly scattered in a square area with $r_0=100$m, i.e., RRHs and users are uniformly distributed over a $100\text{m} \times 100\text{m}$ region. The pathloss exponent is $\eta = 3.5$ and the power ratios $\beta_1 = \beta_2 = \cdots = \beta_K = 1$. The channel coherence time is fixed at $T = 100$.
	
	Performance comparison among different schemes is given in Fig. \ref{Fig:PerfComp} with $\rho = 0.5$ ($\rho$ is defined in Theorem \ref{theorem:Asym}) and $N = K = 1000, 600, 300$ marked on the curves. From Fig. \ref{Fig:PerfComp}, we see that the refined channel sparsification method outperforms the channel sparsification in \eqref{equ:sparsification} by about $10\%$. We also compare the proposed scheme with the random pilot scheme, where the pilot lengths of the two schemes set to be equal for a fair comparison. We see that our proposed pilot design scheme has over $20\%$ performance enhancement over the random pilot scheme. The scheme in \cite{Coldrey2007} is also included for comparison. Specifically, it was shown in \cite{Coldrey2007} that, for a conventional training-based MIMO, the optimal $K$ and $\alpha$ respectively converge to $\alpha T$ and $1/2$ at high SNR. Thus, we randomly choose $T/2$ active users from all the $K$ users and assign an orthogonal pilot sequence to each user. We refer to this scheme as ``globally orthogonal scheme''. We see from Fig. \ref{Fig:PerfComp} that our proposed scheme significantly outperforms the orthogonal scheme. The reason is that the orthogonal scheme spend more time resource in channel training.  
	\begin{figure}[!t]
		\centering
		\includegraphics[width = 0.48\textwidth]{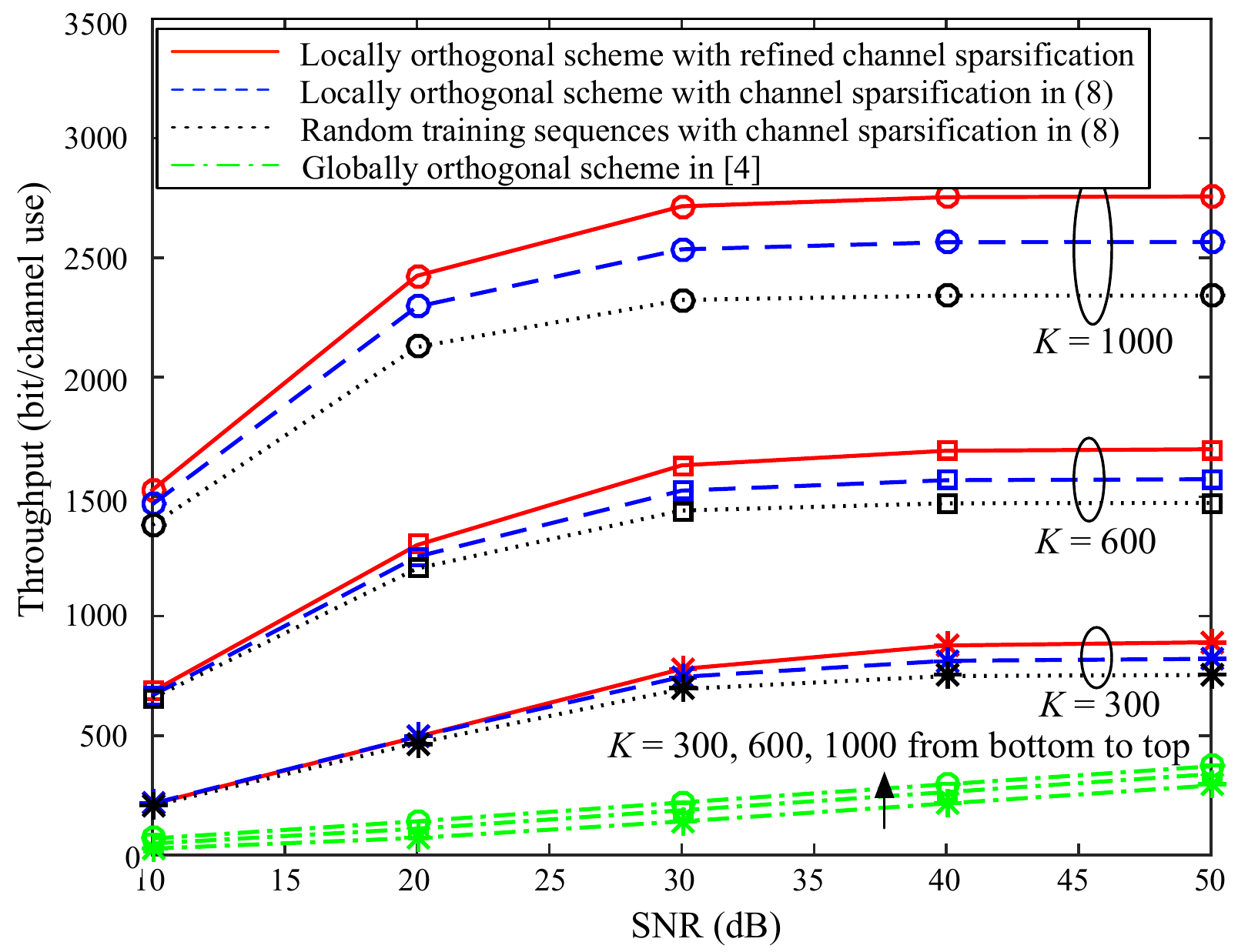}
	 	\caption{Performance comparison between the new and the old channel sparsification criteria with $\rho = 0.5$ and $N = K = 600,300$. SNR = $P_0/N_0$ and $r_0=100$m.}\label{Fig:PerfComp}
	\end{figure}
	\begin{figure}[!t]
		\centering
		\includegraphics[width = 0.49\textwidth]{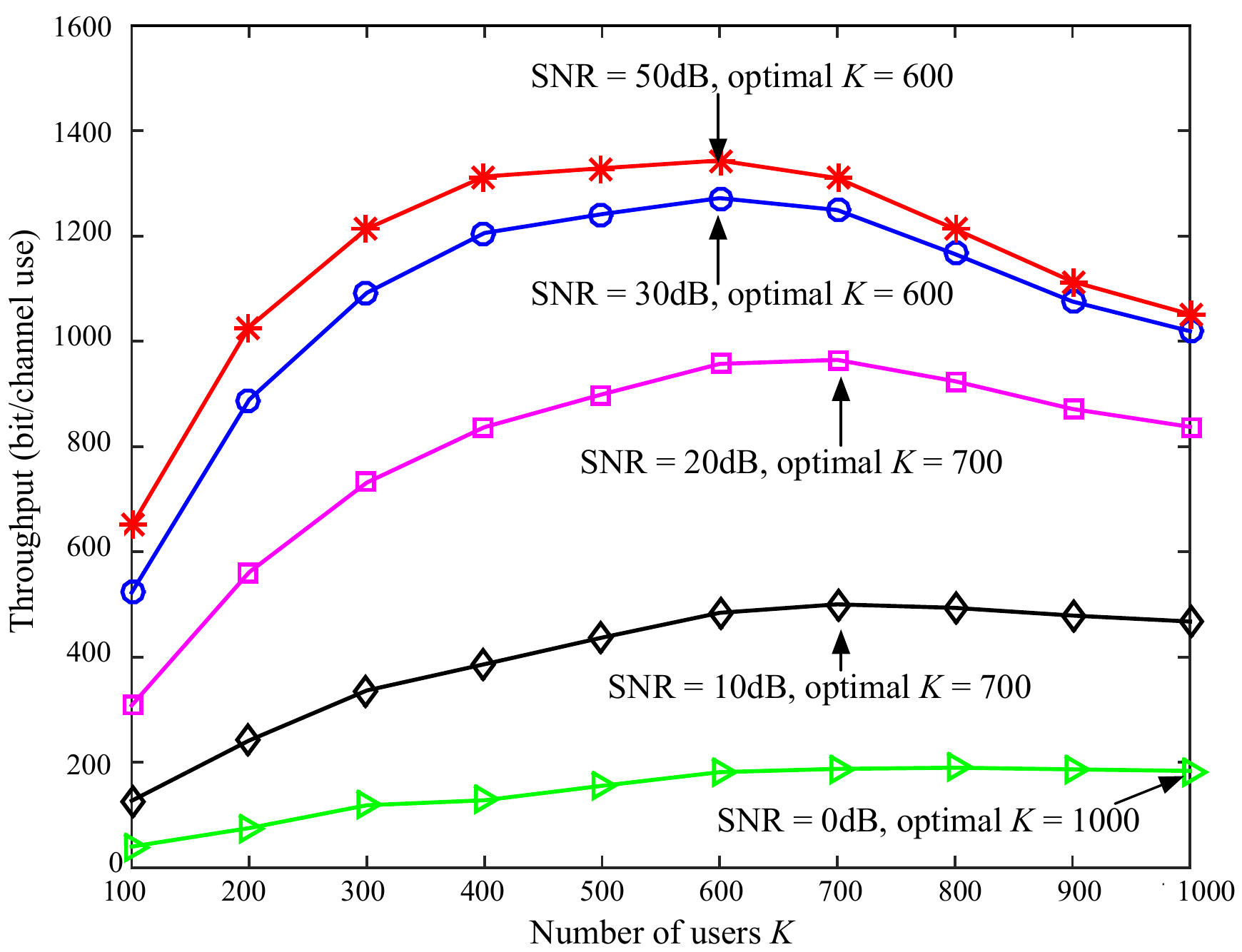}
	 	\caption{The system throughput versus the number of users. The number of RRHs $N=500$. The distance threshold for channel sparsification is set to $r = 10$m and the side length $r_0=100$m.}\label{Fig:PerfvsK}
	\end{figure}
	\begin{figure}[!t]
		\centering
		\includegraphics[width = 0.49\textwidth]{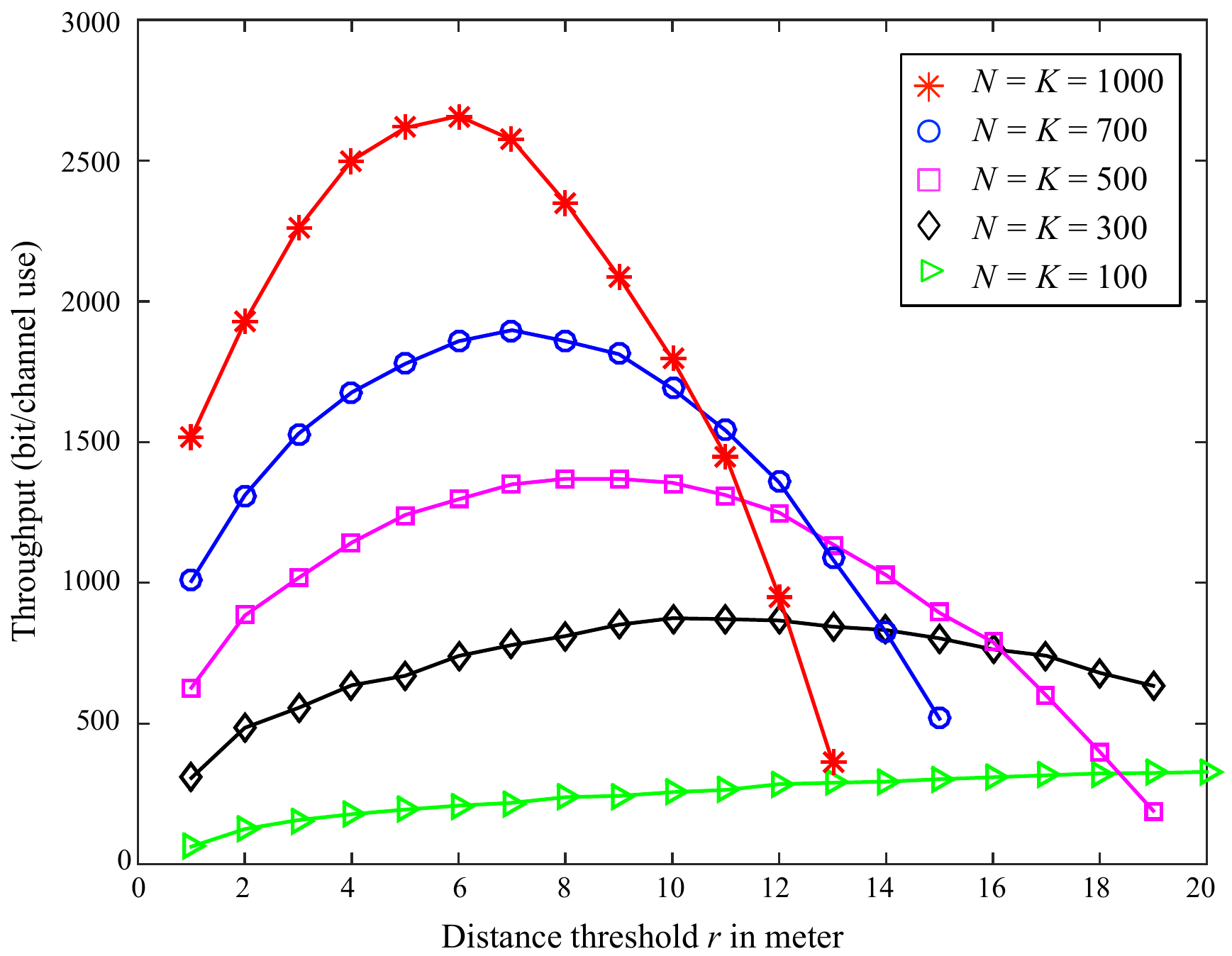}
	 	\caption{The system throughput versus the distance threshold $r$. The signal to noise ratio SNR = $P_0/N_0 = 50$dB, and the side length $r_0=100$m.}\label{Fig:PerfvsRth}
	\end{figure}
	
	The system throughput against the number of users is given in Fig. \ref{Fig:PerfvsK}. We see that, for given $N$, $r$, and SNR = $P_0/N_0$, there is a tradeoff in maximizing the system throughput over $K$. On one hand, the total received power at each RRH increases with $K$, and so is the information rate for the data transmission phase. On the other hand, a larger $K$ implies more channel coefficients to be estimated, and hence increases the required training length. Therefore, a balance needs to be stroke in optimizing $K$. For example, for SNR = 30dB, the optimal $K$ occurs at $K = 600$.
	
	A larger $K$ implies higher received signal power at each RRH and at the same time, higher interference power at each RRH (as by channel sparsification the signals from far-off users are treated as interference). In the low SNR region, the interference is overwhelmed by the noise. Therefore, the more users, the higher system throughput, as seen from the curve for SNR = 0dB in Fig. \ref{Fig:PerfvsK}. In the high SNR region, the interference dominates the noise. Then, a larger $K$ does not necessarily translates to a higher throughput. 
	
	Fig. \ref{Fig:PerfvsRth} illustrates the system throughput against the distance threshold $r$ with SNR = 50dB. We see that there is an optimal distance threshold to maximize the throughput for fixed $N$ and $K$. The reason is explained as follows. With a high threshold $r$, more interference from neighboring users are eliminated in channel estimation, and thus better throughput is achieved. However, at the same time, increasing $r$ implies increasing the training length, as more training sequences are required to be orthogonal at each RRH. Therefore, there is an optimal $r$ for throughput maximization, as illustrated in Fig. \ref{Fig:PerfvsRth}.  
	
\section{Conclusion}
	In this paper, we considered training-based channel estimation for CRANs with $N$ RRHs and $K$ users. We introduced the notion of local orthogonality and formulated the training design problem so as to find the minimum length of training sequences that preserve local orthogonality. A training design scheme based on graph coloring was also proposed. We further showed that the training length is $O(\ln K)$ almost surely as $K\rightarrow \infty$. Therefore, the proposed training design can be applied to a large-size CRAN satisfying local orthogonality at the cost of an acceptable training length.
	
	In this paper, we mainly focus on minimizing the training length. The joint optimization of the training sequences and the data transmission scheme to maximize the system throughput will be an interesting topic for future research.
	
\appendices
\section{Mutual Information Througput}\label{appendixI}
	In this appendix, we derive a throughput lower bound for the training-based CRAN scheme described in Section \ref{sec:preliminaries}. With channel sparsification, the received signal at RRH $i$ in \eqref{equ:Training Model} can be rewritten as
	\begin{equation}\label{equ:Training Model3}
		\mathbf{y}_{i}^\text{p} = \sum_{k\in \mathcal{U}_i} h_{i,k} \gamma_{i,k} \mathbf{x}_{k}^\text{p} +\sum_{k\in \mathcal{U}_i^c} h_{i,k} \gamma_{i,k} \mathbf{x}_{k}^\text{p} + \mathbf{z}_{i}^\text{p},i\in \mathcal{B}.
	\end{equation}
	Each RRH estimates the channel $\{h_{i,k},k\in \mathcal{U}_i\}$ based on $\mathbf{y}_i^\text{p}$ and $\mathbf{x}_{k}^\text{p}$. The minimum mean-square error (MMSE) estimator \cite{Kay2001} of RRH $i$ for user $k$ is given by
	\begin{equation}
		\mathbf{w}_{i,k} = \gamma_{i,k} \mathbf{x}_{k}^\text{p} \bigg(\sum_{k^\prime \in \mathcal{U}_i} \gamma_{i,k^\prime}^2 (\mathbf{x}_{k^\prime}^\text{p})^\text{H} \mathbf{x}_{k^\prime}^\text{p} + N_0 \mathbf{I} \bigg)^{-1},\  k \in \mathcal{U}_i.
	\end{equation}
	where $\mathbf{w}_{i,k} \in \mathbb{C}^{1 \times \alpha T}$. Then the estimation of $h_{i,k}$ denoted by $\hat{h}_{i,k}$ is given by
	\begin{equation}
		\hat{h}_{i,k}=\mathbf{y}_{i}^\text{p} \mathbf{w}_{i,k}^{\text{H}}, \ k\in \mathcal{U}_i, \ i\in \mathcal{B}
	\end{equation}
	where both $\mathbf{y}_{i}^\text{p}$ and $\mathbf{w}_{i,k}$ are row vectors. For $k\in \mathcal{U}_i^c, i\in \mathcal{B}$, the channel estimate of $h_{i,k}$ is set to $0$, i.e.,
	\begin{equation}\label{equ:estHc}
		\hat{h}_{i,k} = 0,\ k\in \mathcal{U}_i^c, \ i\in \mathcal{B}.
	\end{equation}
	Denote by $\text{MSE}_{i,k}$ the corresponding mean square error (MSE) of RRH $i$ for user $k$. Then
	\begin{subequations}
		\begin{align}
			\text{MSE}_{i,k} &= \mathbb{E} \Big[|h_{i,k} - \hat{h}_{i,k}|^2 \Big]\nonumber \\
			          &= 1-\gamma_{i,k}\mathbf{x}_{k}^\text{p}\mathbf{w}_{i,k}^{\text{H}} + \mathbf{w}_{i,k}\Big(\sum_{k^\prime\in \mathcal{U}_i^c} \gamma_{i,k^\prime}^2 (\mathbf{x}_{k^\prime}^\text{p})^\text{H} \mathbf{x}_{k^\prime}^\text{p} \Big) \mathbf{w}_{i,k}^{\text{H}} \nonumber \\
			          &\qquad \qquad \qquad \qquad \qquad \text{for  } k\in \mathcal{U}_i, i\in \mathcal{B}; \\
			\text{MSE}_{i,k} & = 1,  \quad \text{for  } k\in \mathcal{U}_i^c, i\in \mathcal{B}.
		\end{align}
	\end{subequations}
	
	For the data transmission phase, the received signal in \eqref{equ:data model} can be rewritten as 
	\begin{align}
		\mathbf{y}_{i}^\text{d} &= \sum_{k\in \mathcal{U}_i} h_{i,k} \gamma_{i,k} \mathbf{x}_{k}^\text{d} + \sum_{k\in \mathcal{U}_i^c} h_{i,k} \gamma_{i,k} \mathbf{x}_{k}^\text{d} + \mathbf{z}_{i}^\text{d}\nonumber \\
								&= \sum_{k\in \mathcal{U}_i} \hat{h}_{i,k} \gamma_{i,k} \mathbf{x}_{k}^\text{d} + \mathbf{v}_i \nonumber\\
								&= \sum_{k\in \mathcal{U}} \hat{h}_{i,k} \gamma_{i,k} \mathbf{x}_{k}^\text{d} + \mathbf{v}_i,\ i\in \mathcal{B} \label{equ:data model2}
	\end{align}
	where the last step follows from \eqref{equ:estHc}, and $\mathbf{v}_i$ represents the equivalent interference-plus-noise given as
	\begin{equation}\label{equ:IPN}
		\mathbf{v}_i = \sum_{k\in \mathcal{U}} (h_{i,k} - \hat{h}_{i,k}) \gamma_{i,k} \mathbf{x}_{k}^\text{d} + \mathbf{z}_{i}^\text{d}.
	\end{equation}
	The correlation of $\{\mathbf{x}_{k}^\text{d}\}$ is given by
	\begin{subequations}
		\begin{align}
			R_{x_{k}^\text{d}} \triangleq &\frac{1}{(1-\alpha T)} \mathbb{E}[\mathbf{x}_{k}^\text{d} (\mathbf{x}_{k}^\text{d})^\text{H}] = \beta_k^\prime P_0, \ k\in \mathcal{U},\\
			\text{and}\qquad &\frac{1}{(1-\alpha T)} \mathbb{E}[\mathbf{x}_{k}^\text{d} (\mathbf{x}_{m}^\text{d})^\text{H}] = 0, \ \forall k\neq m, \ k,m\in \mathcal{U}.
		\end{align}
	\end{subequations}
	By definition, we obtain
	\begin{subequations}
	\begin{align}
		\sigma_{v_i}^2 = &\frac{1}{(1-\alpha T)} \mathbb{E} [\mathbf{v}_i \mathbf{v}_i^\text{H}] = \sum_{k=1}^K \gamma_{i,k}^2 \beta_k^\prime P_0\cdot \text{MSE}_{i,k} + N_0, \ i\in \mathcal{B},\\
		\text{and}\quad &\frac{1}{(1-\alpha T)} \mathbb{E} [\mathbf{v}_i \mathbf{v}_j^\text{H}] = 0, \ \forall i\neq j,\ i,j \in \mathcal{B}.
	\end{align}
	\end{subequations}
	
	We next express \eqref{equ:data model2} in a matrix form. Define $\hat{\mathbf{H}}$ as the estimated channel matrix with $(i,k)$th element given by $\hat{H}_{i,k} = \hat{h}_{i,k} \gamma_{i,k}, i\in \mathcal{B}, k\in \mathcal{U}$. Denote by $\mathbf{V} = [\mathbf{v}_1^\text{T}, \mathbf{v}_2^\text{T},\cdots, \mathbf{v}_N^\text{T}]^\text{T}$, $\mathbf{Y}^\text{d} = [(\mathbf{y}_{1}^\text{d})^\text{T}, (\mathbf{y}_{2}^\text{d})^\text{T},\cdots, (\mathbf{y}_{N}^\text{d})^\text{T}]^\text{T}$, and $\mathbf{X}^\text{d} = [(\mathbf{x}_{1}^\text{d})^\text{T}, (\mathbf{x}_{2}^\text{d})^\text{T},\cdots, (\mathbf{x}_{N}^\text{d})^\text{T}]^\text{T}$. Then
	\begin{equation}\label{equ:all data model}
		\mathbf{Y}^\text{d} = \hat{\mathbf{H}} \mathbf{X}^\text{d} + \mathbf{V}.
	\end{equation}
	Note that the interference-plus-noise term $\mathbf{V}$ is in general correlated with the signal part $\hat{\mathbf{H}} \mathbf{X}^\text{d}$. Therefore, the achievable rate for \eqref{equ:all data model} is lower bounded by the case with independent Gaussian noise \cite{Medard2000}. Specifically, the throughput lower bound is given by
	\begin{equation}
		I(\mathbf{X}^\text{d};\mathbf{Y}^\text{d}|\hat{\mathbf{H}}) = \log \det \left(\mathbf{I} + \mathbf{R}_V^{-1} \hat{\mathbf{H}} \mathbf{R}_{X^\text{d}} \hat{\mathbf{H}}^\text{H} \right)
	\end{equation}
	where $\mathbf{R}_V = \text{diag}\{\sigma_{v_1}^2,\cdots,\sigma_{v_N}^2\}$ is a diagonal matrix formed by $\{\sigma_{v_n}^2\}$, $\mathbf{R}_{X^\text{d}} = \text{diag}\{R_{x_{1}^\text{d}},\cdots,R_{x_{K}^\text{d}}\}$, and $I(\mathbf{X}^\text{d};\mathbf{Y}^\text{d}|\hat{\mathbf{H}})$ is the conditional mutual information between $\mathbf{X}^\text{d}$ and $\mathbf{Y}^\text{d}$ provided that $\mathbf{x}_{k}^\text{d}$, the $k$-th row of $\mathbf{X}^\text{d}$, is independently drawn from $\mathcal{CN}(0,\beta_k^\prime P_0 \mathbf{I})$ for $k=1,\cdots, K$, and $\mathbf{v}_i$ is independently drawn from $\mathcal{CN}(0,\sigma_{v_i}^2 \mathbf{I} )$ for $i = 1,\cdots, N$. Considering the two-phase transmission scheme, we obtain the information throughput of the system:
	\begin{equation}
		R = (1-\alpha) \mathbb{E}\left[\log\det\left( \mathbf{I} + \mathbf{R}_V^{-1} \hat{\mathbf{H}}  \mathbf{R}_{X^\text{d}} \hat{\mathbf{H}}^\text{H} \right) \right].
	\end{equation}

\ifCLASSOPTIONcaptionsoff
  \newpage
\fi


%

\end{document}